\documentclass{article}
\usepackage{graphicx}
\usepackage{amsmath}
\usepackage{amssymb}
\usepackage{amsthm}
\usepackage{appendix}
\usepackage{booktabs}
\usepackage{array}
\usepackage{algorithm,algpseudocode}
\usepackage{subcaption}
\usepackage{url}

\title{Resilient Liquid Democracy: Mitigating Voting Power Imbalances via Secure Delegation Networks}
\author{Zhuolun Li, Evangelos Pournaras \\ School of Computer Science, University of Leeds, UK}

\begin{document}

\maketitle

\begin{abstract}
Liquid democracy lets voters either vote directly or delegate their voting power to a trusted participant. Existing deployments make delegations publicly visible as they form, which invites popularity-driven herding, makes coercion verifiable, and leaves the election fragile when highly backed delegates abstain. We propose a liquid democracy mechanism that removes these vulnerabilities while keeping the tally fully auditable. Delegation choices are sealed with decentralized timed-release encryption while they are being formed and revealed in full for tallying, and each voter names ranked backup delegates together with a personal fallback ballot so that delegate failures do not silence them. We prove pre-reveal secrecy and resubmission receipt-freeness for the formation phase. Experiments grounded in real voting data show that the mechanism keeps voting power dispersed where transparent formation concentrates it into few hands, keeps outcomes stable under herding where transparent outcomes become a lottery on arrival order, and cuts vote loss under targeted delegate failures from $26\%$ to about $3\%$. The experiments further characterize when delegation improves representational accuracy, namely when abstention is large, unrepresentative, and widely converted into delegation, giving deployments a concrete condition to assess.
\end{abstract}

\section{Introduction}

In real-world collective decision making, participants often differ in how much time, information, and confidence they have when evaluating the issue under consideration. Some participants have substantial domain knowledge, some have relevant local or professional experience, and others simply trust certain people more than they trust themselves on a given decision. 

Liquid democracy is often proposed as a mechanism for such settings. It allows a voter to participate by voting directly, transferring their voting power to another participant, or utilizing both via fallback mechanisms. This flexibility can lower the burden of participation when a voter is busy, uncertain, or insufficiently informed, and it can also improve decision quality when voters choose delegates whom they regard as more reliable, more informed, or more aligned with their judgment.

Despite this potential, existing liquid democracy systems are poorly suited for secure and reliable use in real-world deployments. Most implementations operate under a transparent delegation regime, in which delegation choices are publicly observable as they are formed or can be inferred from the system state. This creates several problems. First, visible delegations encourage herding and popularity cascades. Early or well-known delegates may attract a disproportionate share of support simply because they already appear popular. Randomized experiments show that visible popularity signals of this kind causally attract further endorsements in online voting and that social signals shift real political participation~\cite{muchnik2013social, bond2012sixtymillion}. Second, transparent delegation makes influence easier to verify. If others can observe or later prove how a voter delegated, then rewards, pressure, or retaliation can be conditioned on that behavior. Third, many systems support only a single delegate. If that delegate abstains, becomes inactive, behaves maliciously, or misrepresents the voter, the delegated vote can be lost or distorted.

These limitations create a basic design tension. Delegation can expand participation and improve decision quality, but naive implementations introduce new vulnerabilities that affect independence, robustness, and fairness. Concretely, we solve the problem of keeping delegation private while it is being formed, so that herding and contemporaneous vote-buying lose their grip during formation, without sacrificing the public auditability that a trustworthy tally requires, and without leaving voters exposed when a delegate fails.

In this paper, we address this tension by designing a liquid democracy mechanism for secure trusted delegation. Our system is based on two ideas. First, we introduce sealed delegation using timed-release cryptography to prevent participants and external observers from learning who delegated to whom while decisions are still being formed. This limits herding and removes the evidentiary basis that bribery and coercion strategies rely on during the formation phase. The protection is deliberately scoped to formation. After the reveal restores transparency for auditing, delegations become public, and an adversary willing to wait can condition payment or retaliation on the revealed record. Section~\ref{sec:security} states this boundary precisely and discusses the residual threat. Second, we extend standard liquid democracy with ranked multi-delegation and fallback ballots. Each voter may specify an ordered list of preferred delegates, together with a personal backup ballot that is used if all delegations fail, so that votes are preserved even when delegates abstain or delegation cycles occur.

This paper makes the following contributions:
\begin{itemize}
    \item We formalize liquid democracy as a mechanism for trusted delegation and introduce the notion of delegation information regimes, capturing what voters can observe about the delegation choices of others during formation and how this can shape delegation behavior and network structure, which we study in simulation.
    \item We design a liquid democracy protocol that combines timed-release cryptography with ranked multi-delegation and fallback ballots, providing resistance to herding and to formation-phase bribery and coercion while improving robustness against delegate failure.
    \item We give a formal security analysis of the mechanism. We prove pre-reveal secrecy, meaning that no efficient adversary learns the delegation of a voter before the reveal time, and resubmission receipt-freeness, meaning that a coerced voter can comply with a demand yet still cast their true preference undetectably before the reveal time.
    \item Through experiments grounded in a municipal participatory-budgeting election with a calibration survey, and checked against twenty further participatory-budgeting elections and a large US election survey, we characterize when delegation improves representational accuracy and when it harms it. Delegation helps only when abstention is large and systematically unrepresentative and enough abstainers use the channel; otherwise it is at best neutral and can harm the outcome, and no single delegation style is safest on every electorate.
    \item We show that sealed formation produces more decentralized and robust delegation network. Transparent formation concentrates voting power in proportion to the herding strength, and this concentration translates into vote loss under targeted delegate failures, rising from $26\%$ under sealed formation to $49\%$ under strong herding for single delegation. Ranked multi-delegation with fallback ballots caps the loss at an arithmetic floor of about $3\%$, with the residual loss governed by the fraction of delegators who provide fallback ballots.
\end{itemize}

The rest of the paper is organized as follows. Section~\ref{sec:related} reviews related work and positions our design. Section~\ref{sec:formalization} formalizes liquid democracy and the notion of delegation information regimes, and Section~\ref{sec:design} presents the sealed-delegation protocol and its algorithm. Section~\ref{sec:security} proves pre-reveal secrecy and receipt-freeness. Sections~\ref{sec:experiments} and~\ref{sec:evaluation} describe the experimental setup and report the evaluation. Section~\ref{sec:conclusion} concludes.

\section{Related Work}
\label{sec:related}

\subsection{Delegation as a Mechanism for Expertise Aggregation}

Liquid democracy builds on a broader literature studying how individuals transfer decision power to others. The paradigm originates in proposals for delegative or proxy democracy that combine direct and representative voting through freely chosen, transitive delegation~\cite{ford2002delegative, green2015direct, blum2016liquid}. In epistemic models of collective decision making, delegation is motivated by differences in competence or information across participants, so that decisions can be routed toward people with higher expected decision quality~\cite{green2015direct, blum2016liquid}. A substantial theoretical literature asks whether delegation actually improves decision accuracy and finds that it helps only conditionally. Local delegation mechanisms are not guaranteed to outperform direct voting~\cite{kahng2021liquid, caragiannis2019contribution}, concentration of delegated power can degrade outcomes~\cite{golz2018fluid}, and only under more realistic delegation behavior do good guarantees re-emerge~\cite{halpern2023defense}. Formal analyses of voting power quantify how delegation redistributes influence~\cite{zhang2021power, colley2023measuring}, and a living survey organizes this literature through a taxonomy of modeling choices~\cite{grossi2026modeling}. Our empirical cross-dataset result (Section~\ref{sec: delegation_model}) is a data-driven counterpart to this debate, identifying when delegation helps and when it actively harms. Algorithmic and game-theoretic work also studies how rational agents choose between direct voting and delegation, and how the resulting delegation graph affects the final outcome~\cite{bloembergen2019rational, escoffier2020iterative, colley2021multi}.

Empirical and experimental studies show that voters do not always delegate in an idealized way~\cite{campbell2022liquid, gersbach2021vote}. Participants may rely on limited information and misjudge who is a good delegate. These findings suggest that delegation mechanisms must be robust not only to strategic manipulation, but also to ordinary social influence and imperfect judgment. Our delegation model adopts this lesson, selecting delegates by demographic and ideological similarity rather than by true competence and modeling the social influence channel explicitly, as described in Section~\ref{sec:experiments}.

\subsection{Risks in Delegation Networks and Deployed Systems}

A key challenge in liquid democracy is that the delegation network itself can become a target of manipulation. Alouf-Heffetz et al.~\cite{alouf2024controlling} analyze how changes to delegation edges can alter outcomes. Votes can also be lost without any attacker. A delegation cycle, in which a chain of delegations loops back on itself so that no voter in the chain reaches a direct ballot, can void every vote in the chain, as can a delegate who simply abstains, and iterative processes in which voters keep reassigning their delegations in response to each other may never settle~\cite{dey2021parameterized, kotsialou2018incentivising}. Several proposals strengthen the mechanism against such failures. Ranked delegation allows a voter to list backup delegates~\cite{brill2022liquid}, and generalized models allow splitting influence across several delegates, delegating to a group, or discounting long delegation chains~\cite{bersetche2022generalizing, colley2021multi}. These approaches improve reliability after delegations are formed, but they assume that delegation choices are observable during formation.

Deployed systems show why that observability is harmful. Liquid democracy has been used in civic platforms and party decision making, including LiquidFeedback and Liquid Friesland~\cite{liquidfeedback, liquid-friesland}, the Swedish Direktdemokraterna~\cite{direktdemokraterna}, and a corporate deployment inside Google that actively surfaced social information about voting and delegation~\cite{hardt2015googlevotes}, and delegation-based governance is now common in blockchain ecosystems, where decentralized autonomous organizations (DAOs) let token holders delegate to representatives through tools such as Tally~\cite{tally}. Across these systems, delegated power concentrates. In LiquidFeedback, delegations were visible to members, received delegations were heavy-tailed, and their concentration rose over time as super-voters emerged~\cite{kling2015voting}; in DAOs, a small subset of delegates controls the vast majority of voting power~\cite{fritsch2024analyzing, schmid2024blockchain, weidener2025delegated}.

Visible concentration is harmful for two reasons. It feeds on itself, since randomized experiments show that visible popularity signals causally attract further endorsements~\cite{muchnik2013social, vanderijt2014field, glenski2017rating}, and it exposes the most relied-upon delegates to targeted interference. Such interference is well documented in adjacent voting settings. Publicly identifiable members of the 2016 US Electoral College received organized pressure campaigns and death threats intended to change or block their votes, and two resigned rather than cast them~\cite{pbs2016electors}; in the 2020 US election, the two pivotal members of a county canvassing board were personally telephoned by senior figures of the losing campaign and afterwards attempted to rescind their certification votes~\cite{fox2detroit2023canvassers}. The top-ranked witnesses of the Steem blockchain were collectively neutralized in a hostile takeover~\cite{dale2020steem}, delegated voting power on Arbitrum is openly purchasable per proposal through lobbying markets~\cite{matos2025lobbyfi, austgen2023dao}, and in the 2021 New South Wales local elections an internet voting outage affecting about $140$ voters led a court to void three council elections~\cite{nadel2022ivote}. No such attack is documented on a liquid democracy platform yet, but its precondition, publicly visible super-voters holding concentrated power, clearly is~\cite{kling2015voting}. The failure model of Section~\ref{sec: failure_model} treats targeted delegate failure as this anticipatory threat.

\subsection{Privacy-Preserving Voting and Delegation}

A separate body of work studies how to protect voter privacy and ensure secure vote execution. Early work explored cryptographic approaches for delegation and voting to preserve privacy and anonymity~\cite{link2014opportunistic}. More recent work studies anonymous delegation resolution mechanisms that hide who delegated to whom in the final outcome~\cite{utke2023anonymous}. Ballot secrecy has also been studied for liquid democracy directly~\cite{nejadgholi2021short}, and anti-collusion infrastructure for on-chain voting pursues receipt-freeness in the deployment setting we target~\cite{maci}. The secure e-voting literature more broadly targets receipt-freeness and coercion-resistance, ensuring that a voter cannot prove how they voted even if they wish to~\cite{benaloh1994receipt, juels2005coercion}, together with end-to-end verifiability, so that anyone can audit the tally~\cite{adida2008helios, clarkson2008civitas}. A common practical defense against coercion is to permit re-voting and count only the last ballot of a voter, as deployed in Estonian internet voting, although the security analysis of that system also documents the limits of re-voting as a defense, including the observability of the fact of re-voting~\cite{springall2014estonian}. Our resubmission-based receipt-freeness (Section~\ref{sec: receiptfree}) builds on this re-voting idea but couples it with sealed formation, so that a coercer cannot even tell whether a resubmission changed the vote.

This literature is important, but its focus is usually different from ours. Most privacy-preserving systems aim to hide ballots or identities, whereas our concern is the timing of delegation visibility. In particular, we ask whether delegation choices should be visible while voters are still making them. That question is central to herding, coercion, and popularity effects. It is not the same question as standard ballot secrecy, and our design in fact trades away post-reveal ballot secrecy to obtain a fully auditable tally, a trade-off we return to in Section~\ref{sec:security}.

To position our design, Table~\ref{tab:comparison} compares prior systems along the dimensions that matter for sealed delegation. These dimensions are whether delegation choices are visible during formation, which drives herding and contemporaneous coercion, whether a ranked list of backup delegates and a fallback ballot guard against delegate failure, and whether the final tally is publicly auditable. Our mechanism hides delegations only during formation and then discloses them at a timed reveal, so that auditability is restored once choices can no longer be influenced, which is precisely why the reveal matters.

\begin{table}[t]
\centering
\small
\caption{Positioning of this work relative to prior liquid democracy systems and mechanisms.}
\label{tab:comparison}
\vspace{5pt}
\resizebox{\textwidth}{!}{
\begin{tabular}{p{3.1cm}p{2.8cm}p{1.7cm}p{1.6cm}p{1.9cm}}
\toprule
System / line of work & Formation-phase delegation visibility & Ranked delegates & Fallback ballot & Publicly auditable tally \\
\midrule
LiquidFeedback / Liquid Friesland / Tally~\cite{liquidfeedback, liquid-friesland, tally} & Visible & Limited & No & Yes \\
Ranked delegation models~\cite{brill2022liquid} & Visible & Yes & No & Yes \\
Anonymous delegation mechanisms~\cite{link2014opportunistic, utke2023anonymous} & Partial (not timed-release) & Limited & No & Limited \\
This work & Hidden until timed reveal & Yes & Yes & Yes \\
\bottomrule
\end{tabular}
}
\end{table}

Table~\ref{tab:comparison} shows that no prior line of work combines timed-release sealing of the formation phase with ranked delegation, a fallback ballot, and a publicly auditable tally. Transparent systems provide auditability but expose delegations during formation; privacy-preserving systems hide identities but neither seal the formation phase nor add ranked, fallback-protected delegation. Our design is the only row that provides all four together.

Taken together, prior work and deployed systems reveal two vulnerabilities that any viable sealed-delegation design must address~\cite{kahng2021liquid, fritsch2024analyzing}. First, while delegation can improve decision quality by letting less-informed voters defer to experts, transparent delegation exposes voters to popularity signals that trigger herding and degrade independent judgment. Second, delegation inherently concentrates voting power in a few delegates, and with it failure risk. Without safeguards, the inactivity or incapacitation of a highly backed delegate can disenfranchise large portions of the electorate. Our design targets both.

\section{Formalization of Liquid Democracy}
\label{sec:formalization}

We formalize liquid democracy as a mechanism for trusted delegation in collective decision making, where participants may either vote directly or route their voting power toward another participant whom they regard as more reliable, more informed, or better placed to judge the decision.

\subsection{Participants, Identity, and Expert Delegation Model}

Let $N = \{1,2,\dots,n\}$ denote the finite set of voters, and let $A = \{a_1, a_2, \dots, a_m\}$ denote the set of policy alternatives or candidates.

Participants interact with the system through pseudonymous identities such as blockchain addresses or public keys. Ballots and delegation relationships are associated only with these pseudonymous identifiers.

To support informed delegation, participants may optionally link real-world identity information or verifiable credentials to their pseudonymous identity in order to be recognized as potential delegates. Such disclosure is voluntary and may be public or selectively revealed, depending on the application context. Importantly, a delegate need not be an expert in a narrow technical sense. A voter may delegate to any participant they trust for the current decision.

Each voter $i \in N$ submits a ballot that can contain two components: a direct vote $v_i \in A$, and a ranked list of chosen delegates. Depending on what the voter provides, this allows for:

\begin{enumerate}
    \item \textbf{Direct vote:} The voter provides only a direct vote $v_i$, which is counted directly.
    \item \textbf{Delegation:} The voter leaves the direct vote blank and provides only a list of delegates. Their voting power is transferred.
    \item \textbf{Delegation with Fallback:} The voter provides both a list of delegates and a direct vote $v_i$. The direct vote acts as a fallback, which is only counted if the chosen delegates abstain, fail to vote, or form a delegation cycle that cannot be routed to a direct voter.
\end{enumerate}

Delegation is motivated by differences across voters in knowledge, available time, confidence, and trust~\cite{blum2016liquid}.

\subsection{Delegation Graph and Vote Resolution}
\label{sec: resolution_def}

After ballots are submitted, the system resolves delegations. Resolution is the process that follows the chain of delegations for each voter until it reaches a participant who cast a direct ballot, so that the power of every voter ends up attached to exactly one counted ballot. The set of delegation relationships induces a directed graph $G=(N,E)$, where $(i,j)\in E$ if voter $i$ delegates to voter $j$.

Delegation satisfies the following properties:

\begin{itemize}
    \item \textbf{Transitivity.} If $i$ delegates to $j$ and $j$ delegates to $k$, then the vote of $i$ follows the delegation path until it reaches a voter who casts a direct ballot.
    \item \textbf{Cycles.} Delegation cycles may occur and must be resolved by system rules such as abstention, cycle breaking, or fallback voting.
\end{itemize}

Let $B \subseteq N$ be the set of voters whose effective choice after resolution is a direct ballot; we call such a voter $g \in B$ a guru, that is, a delegate or direct voter at whom one or more delegation paths terminate. For each guru $g$, define its voting weight as
\[
W(g) = \big|\{\, i \in N : \text{the delegation path of $i$ terminates at $g$}\,\}\big|.
\]
Thus $W(g)$ is the number of voters whose voting power is ultimately routed to $g$; equivalently, the ballot of $g$ is counted $W(g)$ times in the final tally.

After resolution, the resolved ballots are aggregated using a standard ballot aggregation rule that is deterministic and publicly verifiable, such as majority voting.

\subsection{Information Regimes in Delegation Formation}

A central concept in this work is the information regime of delegation formation. An information regime determines what a voter can observe about the partially formed delegation graph during the delegation phase. This may include, for example, delegate popularity, visible delegation chains, or aggregate statistics.

While many intermediate regimes are possible, this work focuses on two canonical regimes:

\begin{itemize}
    \item \textbf{Transparent delegation regime.} Delegation relationships are publicly observable as they form. Voters can observe delegate popularity, which may create social influence, herding, and verifiable coercion.
    \item \textbf{Sealed delegation regime.} Delegation relationships remain cryptographically hidden until a designated reveal time $T_r$, for instance the moment the votes are counted. During delegation formation, voters rely on their own beliefs about competence and trust, and cannot observe the delegation decisions of others.
\end{itemize}

The information regime changes the strategic environment. In transparent regimes, a voter may condition their choice on popularity signals or social pressure. In sealed regimes, these signals are absent during formation, which reduces visibility-driven coordination and makes delegation-based bribery or coercion harder to verify.

The goal of this work is to formalize and instantiate sealed delegation formation using timed-release cryptography, and to evaluate how this change affects behavior, robustness, and participation.

A recent survey classifies liquid democracy models along nine features, including how many decision stages a model has, how many delegates a voter may name, and how cycles are handled~\cite{grossi2026modeling}. None of these features captures what voters can see about the delegations of others while choices are being made. The information regime fills exactly this gap. In the terminology of the survey, our model is multi-stage, multi-proxy, deterministic, substitutive with a fallback ballot, constrained to an eligible delegate pool, non-strategic, static, and cycles-intolerant.

\section{System Design}
\label{sec:design}

\begin{figure}[h!]
\centering
\includegraphics[width=\textwidth]{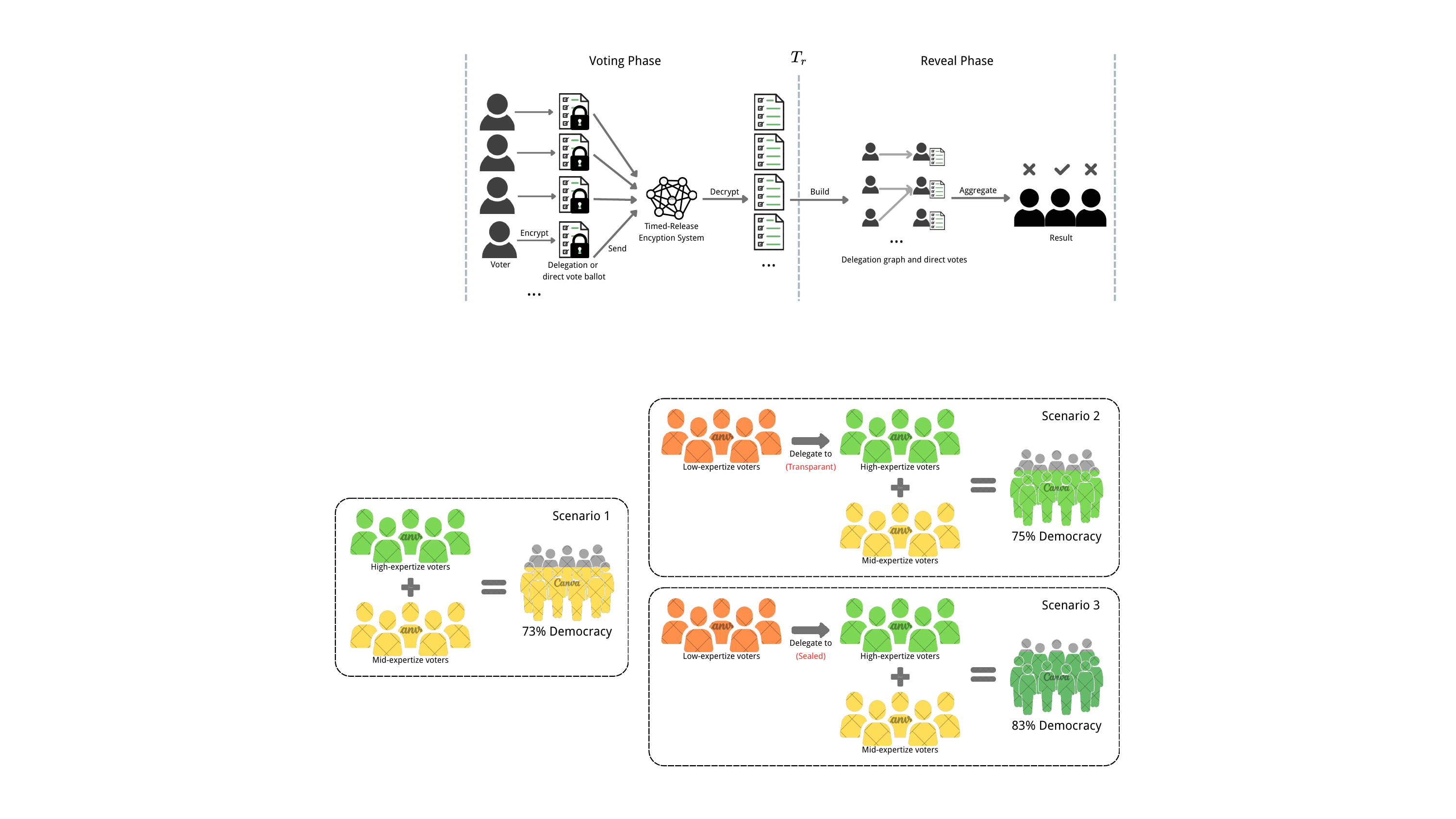}
\caption{System workflow of the proposed sealed liquid democracy mechanism. In the voting phase, voters encrypt and submit their policy tuples to the ledger; in the reveal phase, all encrypted ballots are decrypted and the system computes the final outcome.}
\label{fig:workflow}
\end{figure}

The proposed system addresses both vulnerabilities identified in Section~\ref{sec:related} at once. As shown in Figure~\ref{fig:workflow}, it separates delegation formation during the voting phase from delegation resolution during the reveal phase, enforcing secrecy during the former and transparency during the latter.

\subsection{Cryptographic Primitive: Timed-Release Encryption}
To enforce sealed delegation without relying on a single trusted authority, our system utilizes a decentralized Timed-Release Encryption (TRE) scheme. While traditional TRE schemes often rely on computationally intensive time-lock puzzles or timed commitments~\cite{rivest1996timelock, boneh2000timed}, or on centralized time-servers, we adopt a blockchain-based ``time machine'' approach~\cite{tle}, which leverages a decentralized committee and verifiable secret sharing. This primitive has recently been applied to secure direct voting systems where the voters themselves act as the committee~\cite{collective-secure-voting}; however, our work abstracts this layer, requiring only that the committee remains decentralized and secure, allowing us to focus on the unique complexities of delegation routing.

In this scheme, a conditional reveal committee is established via a smart contract. To encrypt a message for a future reveal time $T_r$, a voter generates a symmetric encryption key, encrypts their message with it, and uses Shamir's Secret Sharing~\cite{shamir1979secret} to split the key into shares. These shares are encrypted and distributed to the committee members. The smart contract ensures that committee members are only incentivized to reveal their shares once the block height or timestamp of the blockchain surpasses $T_r$. Once $T_r$ is reached, any observer can reconstruct the symmetric key from a threshold of revealed shares, allowing all sealed ballots to be decrypted. Crucially, as long as the adversary controls fewer than the reconstruction threshold of committee members prior to $T_r$, the pre-reveal secrecy of the ciphertexts holds. This is a trust assumption on the committee rather than an unconditional mathematical guarantee.

A natural cheaper alternative would be plain commit-and-reveal, in which each voter posts a hash commitment and opens it after the deadline. The decisive argument against it is the non-revealing voter. A voter who never opens, whether strategically after observing early reveals, under instruction from a coercer, or through simple inattention, either loses their ballot or gains a selective-abort lever over the election, and our failure analysis in Section~\ref{sec: failures} quantifies exactly how costly unresolved ballots are. Timed-release encryption removes the voter from the opening path entirely, since the committee reconstructs every key at $T_r$ regardless of voter action.

\subsection{Voting Phase: Sealed Delegation Formation}

Each voter constructs a policy tuple
\[
\Pi_i = (D_i, b_i),
\]
where $D_i = [d_i^{(1)}, d_i^{(2)}, \dots, d_i^{(\ell_i)}]$ is an ordered list of ranked delegates, and $b_i$ is a direct ballot to be used if all delegations fail or if the voter chooses to vote directly.

Instead of publishing delegation relationships directly, each voter encrypts $\Pi_i$ using the TRE scheme described above, parameterized by the reveal time $T_r$. The resulting ciphertext is posted to a public ledger before the submission deadline. For a voter $i$, the encrypted policy is written as
\[
C_i = \mathsf{TREnc}_{T_r}(\Pi_i; r_i),
\]
where $\mathsf{TREnc}_{T_r}$ denotes timed-release encryption and $r_i$ is fresh randomness. Before encryption, every policy tuple is padded to a fixed maximum encoding size, with a bounded delegate list and a fallback slot that is always present, so that ciphertext length reveals nothing about how many delegates a voter ranked or whether they delegated at all. If policy updates are allowed before the deadline, the voter may resubmit, and only the latest valid submission is accepted, where latest is defined by the total order of the ledger rather than by any voter-chosen counter, so a coercer cannot dictate a sequence number that would invalidate later overrides.

This design should not be understood as making the election opaque. The system hides delegation decisions only during the phase in which voters are still making them. This prevents a voter from being influenced by visible delegation counts and prevents an external actor from obtaining contemporaneous proof of how a voter delegated. After $T_r$, however, the system reveals all accepted policies so that the tally is fully auditable. This restored transparency also means the anti-coercion protection is temporal rather than permanent. An adversary who waits until after $T_r$ can read the revealed policies, so the mechanism targets influence during formation and does not provide lasting ballot secrecy. Section~\ref{sec:security} makes this scope precise.

Until time $T_r$, delegation relationships and ballots remain computationally hidden. Although ciphertexts are publicly visible, no participant can infer who delegated to whom under standard cryptographic assumptions. As a result, voters must rely on their own judgment about competence and trust when selecting delegates. Early-mover advantages and popularity signals are therefore suppressed during formation.

At time $T_r$, the timed-release mechanism enables decryption of all submitted policy tuples. The system then transitions from temporary secrecy to full transparency, so that delegations and fallback ballots become readable, auditable, and ready for deterministic resolution.

\subsection{Reveal Phase: Deterministic Delegation Resolution}
\label{sec:reveal}

Once policy tuples are revealed, delegation is resolved and the ballots are aggregated deterministically. A ciphertext that decrypts to a malformed policy is treated as an abstention; under plain timed-release encryption such invalidity is only discoverable at $T_r$, since no one can inspect ciphertext contents earlier.

Voting power propagates along ranked delegation lists. For each voter $i$, the system evaluates delegates in order. If the highest-ranked delegate casts a valid vote, the voting power of $i$ is assigned to that delegate. If that delegate abstains or becomes inactive, the system proceeds to the next ranked delegate. If all delegates fail, the direct ballot $b_i$ is used if provided. Only if no valid delegate and no direct ballot exist is the vote marked as abstention.

Delegation cycles are resolved deterministically so that resolution always terminates. If a cycle forms, the system must break it while respecting the ranked preferences of voters. The system selects one voter in the cycle to ``step down'' to their next highest-ranked delegate; the specific choice does not affect correctness, so we fix the lowest-indexed voter $v^* = \min_{v \in C} v$, where $C$ is the set of voters in the cycle, to keep resolution deterministic and publicly recomputable. This iterative unrolling continues until the cycle is broken. A fallback ballot is only invoked if a selected voter has exhausted all their ranked delegates. This ensures that lower-ranked trusted proxies are prioritized over direct voting, maintaining the philosophy of expertise delegation. Because all policies are publicly visible after $T_r$, any observer can recompute delegation paths and verify the correctness of the final tally.

The ranked delegation mechanism transforms delegation from a single-point-of-failure structure into a more resilient routing system. In single-delegate designs~\cite{liquidfeedback, liquid-friesland, tally}, failure of a popular delegate can invalidate many votes at once. By allowing multiple ranked delegates and fallback ballots, the system reroutes voting power without changing the independence of delegation formation.

\subsection{Algorithm for Delegation Formation and Resolution}
\label{subsec:algorithm}

We now specify the full sealed delegation procedure as an explicit algorithm, presented in Algorithm~\ref{alg:sealed-delegation}. The algorithm separates submission, reveal, cycle resolution, fallback use, and tallying steps. The algorithm is deterministic by construction. Once the reveal occurs, any observer can reconstruct the delegation graph, verify cycle handling, and recompute the final tally from the published inputs.

\begin{algorithm}
\caption{Sealed ranked delegation with reveal-time resolution}
\label{alg:sealed-delegation}
\begin{enumerate}
\item \textbf{Input:} voter set $N$, submission deadline $T_s$ and reveal time $T_r$ with $T_s < T_r$, so that submissions close at $T_s$ while every ciphertext stays sealed until the later reveal time $T_r$, and for each voter $i \in N$ a policy tuple
\[
\Pi_i = (D_i, b_i),
\]
where $D_i = [d_i^{(1)}, d_i^{(2)}, \dots, d_i^{(\ell_i)}]$ is an ordered list of ranked delegates and $b_i$ is an optional fallback ballot.

\item \textbf{Submission phase:} Before $T_s$, each voter encrypts $\Pi_i$ as
\[
C_i = \mathsf{TREnc}_{T_r}(\Pi_i; r_i),
\]
and posts $C_i$ to the public ledger. If multiple submissions are allowed, the system retains only the latest valid submission of each voter in the total order of the ledger.

\item \textbf{Reveal phase:} At time $T_r$, the system decrypts all accepted ciphertexts and reconstructs the set of revealed policies $\{\Pi_i\}_{i \in N}$.

\item \textbf{Initialize choices:} For each voter $i \in N$, set their active choice to their highest-ranked valid delegate $d_i^{(1)}$ who is not inactive and has a revealed policy. If no such delegate exists, set their active choice to their fallback ballot $b_i$.

\item \textbf{Iterative Cycle Resolution:} Construct the directed delegation graph based on all current active choices. While cycles exist in the graph, do the following for each cycle:
\begin{itemize}
\item Let $C \subseteq N$ be the set of voters forming the cycle. Select the voter $v^* = \min_{v \in C} v$.
\item Advance the active choice of $v^*$ to their next available ranked delegate $d_{v^*}^{(k+1)}$ who has a valid revealed policy.
\item If $v^*$ has exhausted all ranked delegates in $D_{v^*}$, set their active choice to their fallback ballot $b_{v^*}$. If $b_{v^*}$ does not exist, mark $v^*$ as abstaining.
\end{itemize}

\item \textbf{Final Resolution:} Once the graph is cycle-free, the path of every voter deterministically terminates at a direct voter, a fallback ballot, or an abstention. Route the voting power along these paths.

\item \textbf{Tally:} Aggregate all resolved ballots using the public deterministic ballot aggregation rule $F$. The final tally is publicly recomputable from the revealed policies and the resolution rule.
\end{enumerate}
\end{algorithm}

\subsection{Correctness and Protocol-Level Guarantees}
\label{sec:guarantees}
Before turning to cryptographic security, we record three guarantees that follow directly from the algorithm above.

\paragraph{Post-reveal auditability.} After $T_r$, all accepted policies are decrypted and published. Because resolution is deterministic, any observer can recompute the delegation graph, verify how each vote is routed, and reproduce the final tally from the revealed transcript. The mechanism therefore does not trade privacy for opacity; it delays disclosure only during formation, then reestablishes full public verifiability for tallying and audit.

\paragraph{Deterministic termination.} The resolution procedure terminates for every revealed policy profile, because ranked delegation is a finite ordered search through the delegate list of each voter, and cycles are broken by the fixed public rule of Step~5, which advances the lowest-indexed voter in the cycle. No ad hoc human intervention is needed to complete vote resolution.

\paragraph{Robustness to delegate failure.} Ranked delegates and fallback ballots reduce vote loss when delegates abstain, become inactive, or fail to submit. Because each voter may specify an ordered list of delegates, the protocol reroutes voting power to the next available delegate before resorting to the fallback ballot. This is a robustness property that single-delegate systems~\cite{liquidfeedback, liquid-friesland, tally} lack, where the failure of one popular delegate can invalidate many votes at once; we quantify the effect empirically in Section~\ref{sec: failures}.

\section{Formal Security Analysis}
\label{sec:security}

\newtheorem{definition}{Definition}
\newtheorem{theorem}{Theorem}

This section proves the two security properties the design depends on and states precisely the assumptions under which they hold.

We establish two properties:
\begin{description}
    \item[Pre-reveal secrecy (\S\ref{sec: secrecy}):] before the reveal time $T_r$, no efficient adversary learns anything about how a voter delegated or voted. This is what suppresses popularity-driven herding, since no one can watch delegations accumulate while choices are still being made, and what removes the contemporaneous evidence a briber would need to verify a purchased vote during formation.
    \item[Receipt-freeness via resubmission (\S\ref{sec: receiptfree}):] even a voter forced to hand over a ``receipt'', that is, a ciphertext together with the randomness that produced it, can still cast a different, true policy that the coercer cannot detect before $T_r$, so long as resubmission remains possible. This defeats coercion of the form ``prove to me how you voted.''
\end{description}
Our strategy is to assume the cryptographic building block and reduce each protocol property to it. We do not re-prove the timed-release encryption (TRE) primitive; its security is established in the constructions we build on~\cite{tle, collective-secure-voting}. Instead we assume TRE satisfies one precisely stated property (Definition~\ref{def:indtrcpa}) and show that breaking either protocol property would break that assumption.

\subsection{Threat Model and Assumptions}
\label{sec: threatmodel}

\paragraph{Adversary.} We consider a probabilistic polynomial-time (PPT) adversary $\mathcal{A}$ that can (i) read the entire public ledger and observe the exact timing of every submission; (ii) post ciphertexts on behalf of voters it controls; and (iii) adaptively corrupt voters, learning their internal state, including their encryption randomness $r_i$, and dictating their submissions. The single voter whose secrecy a theorem is about is assumed to remain honest and uncorrupted until $T_r$, since corrupting that voter trivially reveals its own vote, so no secrecy claim could hold for it.

\paragraph{Trust assumptions.} Our guarantees hold under two standard conditions. First, the public ledger is append-only and totally ordered, so the ``latest valid submission wins'' rule of the protocol is well-defined and a posted ciphertext cannot be retroactively deleted. Second, $\mathcal{A}$ controls fewer than the reconstruction threshold of reveal-committee members, so no coalition it assembles can decrypt early. This assumption only has force before $T_r$, since all policies become public afterwards. These are exactly the conditions under which the timed-release primitive operates~\cite{tle}.

\paragraph{The building block.} The protocol uses a TRE scheme $\mathsf{TRE}$ with algorithms $\mathsf{Setup}$, $\mathsf{TREnc}$, and $\mathsf{TRDec}$. The only property we need is that, before the reveal time, a ciphertext hides its plaintext even from an adversary that corrupts part of the reveal committee. Definition~\ref{def:indtrcpa} states this as an indistinguishability game that grants the adversary the full committee-corruption budget of the threat model. We stress a separation that the rest of the section relies on. This indistinguishability property, which formalizes what secrecy means, is a property of the primitive and is assumed, whereas the protocol-level properties we prove, namely pre-reveal secrecy and receipt-freeness, are reduced to it.

\begin{definition}[IND-TR-CPA security under partial committee corruption]
\label{def:indtrcpa}
Let the TRE scheme operate with a reveal committee of size $k$ and reconstruction threshold $t$. Consider the following game between a challenger and a PPT adversary $\mathcal{A}$ that runs entirely before time $T_r$:
\begin{enumerate}
    \item The challenger runs $\mathsf{Setup}$, generates the committee state, and gives $\mathcal{A}$ the public parameters.
    \item $\mathcal{A}$ may adaptively corrupt up to $t-1$ committee members. For each corruption it receives the full internal state of that member, including any shares of key material distributed to that member in the game.
    \item $\mathcal{A}$ chooses two equal-length messages $m_0, m_1$ and sends them to the challenger.
    \item The challenger samples a secret bit $b \xleftarrow{\$} \{0,1\}$ uniformly at random, returns the challenge ciphertext $c^* \leftarrow \mathsf{TREnc}_{T_r}(m_b; r^*)$ with fresh randomness $r^*$, and distributes the associated key shares to the committee exactly as in the real scheme. Corruption queries may continue after the challenge, subject to the same bound of $t-1$.
    \item $\mathcal{A}$ outputs a guess $b'$ and wins if $b' = b$.
\end{enumerate}
The scheme is \textbf{IND-TR-CPA secure under partial committee corruption} if the advantage of every PPT adversary
\[
\mathrm{Adv}^{\text{IND-TR-CPA}}_{\mathsf{TRE},\mathcal{A}}(\lambda) \;=\; \Bigl|\Pr[b'=b] - \tfrac{1}{2}\Bigr|
\]
is negligible in the security parameter $\lambda$, that is, smaller than $1/p(\lambda)$ for every polynomial $p$ and all sufficiently large $\lambda$. Intuitively, before $T_r$, even an adversary holding up to $t-1$ shares of the challenge key material can guess which message a ciphertext hides essentially no better than a coin flip. This is the sole cryptographic assumption used below. It matches the committee-corruption allowance of the threat model above, and it is the property targeted by the decentralized ``time-machine'' construction we build on, whose secret sharing tolerates exactly this corruption bound~\cite{tle}.
\end{definition}

\subsection{Pre-Reveal Secrecy}
\label{sec: secrecy}

\paragraph{Property.} Fix an honest target voter $i$. Pre-reveal secrecy says that for any two policy tuples $\Pi_0,\Pi_1$ that $i$ might submit, both padded to the fixed encoding size of the protocol as required in Section~\ref{sec:design}, what the adversary sees of the protocol before $T_r$ is computationally independent of which one $i$ actually submitted. This is the formal statement of ``the seal holds during formation.'' The padding requirement is what licenses the equal-length condition of Definition~\ref{def:indtrcpa}; without it, ciphertext length alone would reveal how many delegates a voter ranked.

\begin{theorem}[Pre-reveal secrecy]
\label{thm:secrecy}
If $\mathsf{TRE}$ is IND-TR-CPA secure (Definition~\ref{def:indtrcpa}), then under the assumptions of \S\ref{sec: threatmodel} no PPT adversary can distinguish, before $T_r$, whether an honest voter $i$ submitted $\Pi_0$ or $\Pi_1$, except with negligible advantage.
\end{theorem}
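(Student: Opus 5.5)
We prove Theorem~\ref{thm:secrecy} by a direct reduction to Definition~\ref{def:indtrcpa}. Suppose a PPT adversary $\mathcal{A}$ distinguishes, before $T_r$, whether the honest voter $i$ submitted $\Pi_0$ or $\Pi_1$ with non-negligible advantage $\epsilon$. We build an IND-TR-CPA adversary $\mathcal{B}$ that runs $\mathcal{A}$ as a subroutine, simulates the entire pre-reveal protocol view for it, and embeds the challenge ciphertext $c^*$ as the ledger entry of voter $i$. $\mathcal{B}$ then outputs whatever bit $\mathcal{A}$ outputs, so its advantage equals $\epsilon$. That contradicts the assumed negligibility.

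The simulation proceeds in the order of the protocol. First, $\mathcal{B}$ receives the public parameters from the challenger and forwards them to $\mathcal{A}$. It also forwards every committee-corruption request of $\mathcal{A}$ to the challenger. The threat model of \S\ref{sec: threatmodel} bounds these requests by $t-1$, which is exactly the budget Definition~\ref{def:indtrcpa} allows, so the relayed member states are distributed as in the real protocol.

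Second, all voters other than $i$ are simulated by $\mathcal{B}$ itself. For honest voters, $\mathcal{B}$ picks their policies as the experiment prescribes and encrypts them with $\mathsf{TREnc}_{T_r}$ using public parameters only. For voters that $\mathcal{A}$ corrupts, $\mathcal{B}$ hands over their simulated internal state, including each $r_j$, and posts whatever ciphertexts $\mathcal{A}$ dictates. Their key shares are generated by $\mathcal{B}$ exactly as in the real scheme.

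Third, when voter $i$ submits, $\mathcal{B}$ sends $(\Pi_0,\Pi_1)$ to the challenger and posts the returned $c^*$ at the time $\mathcal{A}$ schedules. The padding rule of Section~\ref{sec:design} guarantees $|\Pi_0|=|\Pi_1|$, so this is a legal challenge query. Submission timing and ledger order are chosen by $\mathcal{B}$ or $\mathcal{A}$ independently of the bit $b$, and all other entries are independent of $b$. Hence the view of $\mathcal{A}$ is identically distributed to the real protocol with $\Pi_b$ submitted.

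Resubmissions by $i$ are the one complication. If $i$ submits several ciphertexts, the honest choices of $i$ may differ between the two worlds, and we handle this with a standard hybrid argument over the $q$ ciphertexts of $i$. Hybrid $H_j$ encrypts the first $j$ of them under world-$1$ plaintexts and the rest under world-$0$ plaintexts. Each adjacent pair $H_{j-1},H_j$ is bridged by the single-challenge reduction above, with the remaining ciphertexts of $i$ encrypted by $\mathcal{B}$ itself. This costs a factor $q$, which is polynomial, so the total advantage stays negligible.

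I expect the main obstacle to be making this simulation faithful with respect to the committee. The key shares for ciphertexts that $\mathcal{B}$ generates itself must be consistent with the challenger-held shares of corrupted members. In particular, $\mathcal{B}$ must answer a corruption of member $k$ with that member's shares for every ciphertext on the ledger, not just for $c^*$. I would resolve this by noting that $\mathcal{B}$ knows the share material it created for the other ciphertexts, since it generated that material, and can merge it with the challenger's response for $c^*$. If the construction of~\cite{tle} ties shares to committee key material that $\mathcal{B}$ does not know, I would instead rely on the fact that encryption in that scheme uses only public committee keys. In that case $\mathcal{B}$ can still produce correctly distributed encrypted shares, and the only secrets it must reveal on corruption are the challenger-provided ones.

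Finally, adaptive corruption of the target voter is excluded by the hypothesis that $i$ stays honest until $T_r$. That is precisely what keeps $r^*$ unknown to $\mathcal{A}$.
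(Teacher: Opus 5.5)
Your proposal is correct and follows the same reduction as the paper: $\mathcal{B}$ simulates all other voters and voter corruptions itself, relays committee corruptions to its challenger while merging in the share material it generated for the simulated ciphertexts, embeds the challenge $c^*$ for the padded pair $(\Pi_0,\Pi_1)$ as voter $i$'s submission, and outputs $\mathcal{A}$'s guess, so the advantages coincide. Your hybrid argument over multiple resubmissions of $i$, which loses only a polynomial factor $q$, is a sound extension that the paper does not spell out, since its proof treats a single submission by $i$.
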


\begin{proof}
We give a reduction. From any adversary $\mathcal{A}$ that distinguishes the two policies of $i$ with advantage $\epsilon$, we build an adversary $\mathcal{B}$ that wins the IND-TR-CPA game (Definition~\ref{def:indtrcpa}) with the same advantage $\epsilon$. Since IND-TR-CPA security forces the advantage of $\mathcal{B}$ to be negligible, $\epsilon$ is negligible too, which is exactly the claim.

$\mathcal{B}$ plays the IND-TR-CPA game on the outside while simulating the whole voting protocol for $\mathcal{A}$ on the inside:
\begin{enumerate}
    \item \textbf{Setup.} $\mathcal{B}$ receives the public parameters from its challenger, generates the state of all voters and the ledger honestly, and starts $\mathcal{A}$.
    \item \textbf{Other voters and voter corruptions.} Whenever $\mathcal{A}$ corrupts a voter or posts on behalf of a controlled voter, $\mathcal{B}$ answers exactly as the real protocol would; it can do so because it created all of these voters and knows their secrets. The target voter $i$ is never corrupted, as required by the threat model.
    \item \textbf{Committee corruptions.} Whenever $\mathcal{A}$ corrupts a committee member, $\mathcal{B}$ forwards the corruption to its own game and returns the state of that member. That state contains the challenger-provided shares for the challenge ciphertext together with the shares $\mathcal{B}$ itself generated when encrypting the policies of the simulated voters, which $\mathcal{B}$ can supply because it formed those encryptions honestly. The bound of $t-1$ corruptions carries over unchanged.
    \item \textbf{Embedding the challenge.} When $\mathcal{A}$ commits the two candidate policies $\Pi_0,\Pi_1$ for voter $i$, $\mathcal{B}$ forwards their padded encodings unchanged as its own challenge messages and receives $c^* = \mathsf{TREnc}_{T_r}(\Pi_b; r^*)$ from the challenger. $\mathcal{B}$ posts $c^*$ to the simulated ledger as the submission of voter $i$.
    \item \textbf{Output.} $\mathcal{B}$ runs $\mathcal{A}$ to the end of the pre-reveal phase and outputs whatever bit $b'$ $\mathcal{A}$ outputs.
\end{enumerate}
The crux is that the secret bit $b$ enters the view of $\mathcal{A}$ at exactly two places, the challenge ciphertext $c^*$ and the committee shares associated with it, both of which come from the challenger of $\mathcal{B}$, and every other part of the simulation is distributed identically to the real protocol. Hence the view of $\mathcal{A}$ here is identical to its view when $i$ genuinely submits $\Pi_b$, so $\mathcal{A}$ guesses $b$ correctly in the simulation precisely when it would in the real run. Therefore $\mathcal{B}$ wins exactly when $\mathcal{A}$ does:
\[
\mathrm{Adv}^{\text{IND-TR-CPA}}_{\mathsf{TRE},\mathcal{B}}(\lambda) \;=\; \mathrm{Adv}^{\text{secrecy}}_{\mathcal{A}}(\lambda) \;=\; \epsilon .
\]
IND-TR-CPA security makes the left-hand side negligible, so $\epsilon$ is negligible.
\end{proof}

\paragraph{Beyond passive secrecy.} Indistinguishability under chosen-plaintext attack is the minimal property for secrecy, and it does not by itself rule out malleability. An active adversary could take the ciphertext of a target voter and re-post a mauled or duplicated version under its own pseudonym, a ballot-copying attack familiar from Helios-style systems~\cite{cortier2013attacking}. Deployments should therefore require each submission to carry a proof of plaintext knowledge or instantiate the primitive with a non-malleable variant, and reject duplicate ciphertexts at submission time; our analysis assumes submissions are independently formed.

\subsection{Receipt-Freeness via Resubmission}
\label{sec: receiptfree}

\paragraph{The coercion the system defends against.} A coercer can order a voter to encrypt a dictated policy $\Pi_{\mathcal{A}}$ and hand over the ciphertext together with the randomness used, as a receipt proving compliance. We show this receipt is worthless. The defense combines pre-reveal secrecy with the rule that only the latest valid submission of a voter is tallied, a known coercion mitigation from re-voting systems~\cite{springall2014estonian, juels2005coercion}; our addition is that sealed formation hides whether a resubmission changed the vote at all.

\begin{theorem}[Resubmission receipt-freeness]
\label{thm:receiptfree}
Suppose the ledger is append-only, the tally uses only the latest valid submission of each voter in the total order of the ledger, and the coerced voter is able to post one further submission before the deadline $T_s$. Then a voter coerced into producing a receipt $(C_{\mathcal{A}}, r_i)$ for a dictated policy $\Pi_{\mathcal{A}}$ can instead cast a different true policy $\Pi_{\text{true}}$ such that, before $T_r$, the coercer cannot distinguish this voter from an obedient voter who resubmitted a fresh encryption of $\Pi_{\mathcal{A}}$.
\end{theorem}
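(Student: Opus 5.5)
The plan is to reduce this to Theorem~\ref{thm:secrecy}, or more directly to IND-TR-CPA security (Definition~\ref{def:indtrcpa}), using the structure of the two worlds the coercer must tell apart. In both the obedient and the resisting world, the ledger holds the receipt ciphertext $C_{\mathcal{A}}=\mathsf{TREnc}_{T_r}(\Pi_{\mathcal{A}};r_i)$, which the coercer can verify against $r_i$ and which is identical in both worlds. It is followed by exactly one further submission from the voter, posted at a time and position in the ledger order that the voter picks identically in both worlds. The two worlds differ only in that second ciphertext: a fresh encryption $\mathsf{TREnc}_{T_r}(\Pi_{\mathcal{A}};r')$ in the obedient world, and $\mathsf{TREnc}_{T_r}(\Pi_{\text{true}};r')$ in the resisting one. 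The coercer holds no knowledge of the fresh randomness $r'$, since it is generated after the receipt is handed over. Both plaintexts are padded to the fixed encoding size, so the equal-length condition holds.

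The key steps are as follows. First, I fix the resisting strategy and check correctness. The voter posts $C_{\mathcal{A}}$ as ordered, then before $T_s$ posts $C'=\mathsf{TREnc}_{T_r}(\Pi_{\text{true}};r')$. Because the ledger is append-only and totally ordered and the tally keeps only the latest valid submission, $\Pi_{\text{true}}$ is what gets counted at $T_r$. The coercer also cannot pre-empt this with a dictated sequence number, since ordering is fixed by the ledger.

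Second, I argue indistinguishability by a reduction $\mathcal{B}$ modeled on the proof of Theorem~\ref{thm:secrecy}. $\mathcal{B}$ simulates the whole protocol and forwards committee corruptions (at most $t-1$). It lets the coercer $\mathcal{A}$ dictate $\Pi_{\mathcal{A}}$, computes the receipt honestly and hands $(C_{\mathcal{A}},r_i)$ to $\mathcal{A}$. It then submits $m_0=\Pi_{\mathcal{A}}$ and $m_1=\Pi_{\text{true}}$ as its challenge messages and posts the challenge $c^*$ as the voter's second submission. When $b=0$, the view of $\mathcal{A}$ is exactly the obedient world; when $b=1$, it is exactly the resisting world. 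Hence the coercer's distinguishing advantage equals the IND-TR-CPA advantage of $\mathcal{B}$, which is negligible.

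Third, I note that the metadata, namely timing and the count of submissions, is identical by construction. This is why the theorem compares against an obedient voter who also resubmits rather than one who submits once. It matches the Estonian caveat that the mere fact of re-voting is observable.

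I expect the main obstacle to be the modeling of what the coercer may do after taking the receipt. In particular, the voter must remain uncorrupted with respect to $r'$ and its post-receipt state, and the coercer must not be able to force the voter's last ledger action or observe the voter's device. I would make this explicit as an assumption that the coercer obtains only $(C_{\mathcal{A}},r_i)$ and public ledger data, consistent with the threat model's treatment of the target voter as honest until $T_r$. A second subtlety is that the receipt ciphertext is coercer-known plaintext sitting alongside the challenge. This causes no problem, because $\mathcal{B}$ generates it itself with independent randomness, so the simulation remains perfect.
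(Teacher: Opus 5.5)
Your proposal is correct and follows the paper's proof. It uses the same override-by-resubmission strategy justified by the latest-in-ledger-order rule, compares against an obedient voter who also posts a second ciphertext, and draws indistinguishability from the secrecy of the timed-release encryption. The one difference is that you reduce directly to IND-TR-CPA (Definition~\ref{def:indtrcpa}) with $\mathcal{B}$ generating the receipt itself, whereas the paper invokes Theorem~\ref{thm:secrecy} as a black box; your route is slightly more careful, since that theorem is stated for a target voter none of whose randomness is ever surrendered.
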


\begin{proof}
The voter proceeds in two steps.
\begin{enumerate}
    \item \textbf{Satisfy the coercer.} Encrypt the dictated policy with randomness $r_i$ as $C_{\mathcal{A}} = \mathsf{TREnc}_{T_r}(\Pi_{\mathcal{A}}; r_i)$, post it, and hand over the pair $(C_{\mathcal{A}}, r_i)$. The coercer recomputes the encryption of $\Pi_{\mathcal{A}}$ under $r_i$, sees that it matches the posted ciphertext, and accepts the receipt.
    \item \textbf{Override with the true vote.} Encrypt the true policy as $C_{\text{true}} = \mathsf{TREnc}_{T_r}(\Pi_{\text{true}}; r'_i)$ with fresh randomness $r'_i$ and post it before $T_s$. By the latest-in-ledger-order rule, $C_{\text{true}}$, not $C_{\mathcal{A}}$, is the ballot tallied at $T_r$.
\end{enumerate}
It remains to show the coercer cannot detect the override before $T_r$. Compare the defecting voter with an obedient voter who also posts a second ciphertext, a fresh re-encryption of $\Pi_{\mathcal{A}}$. In both worlds the coercer sees the receipt pair followed by one further ciphertext from voter $i$, and by pre-reveal secrecy (Theorem~\ref{thm:secrecy}) that ciphertext is computationally independent of its plaintext before $T_r$, so the coercer cannot tell whether it re-encrypts $\Pi_{\mathcal{A}}$ or defects to $\Pi_{\text{true}}$. The surrendered randomness $r_i$ certifies only the first ciphertext and reveals nothing about the second. The receipt thus loses all binding power.
\end{proof}

\paragraph{The comparison world matters.} The guarantee holds against a coercer who tolerates resubmission or cannot rule it out. It does not hold against one who demands that the voter post exactly once, since submissions are attributable per pseudonym and a second ciphertext is then detected defection. A deployment can close this counting channel by having every client automatically post re-encryptions of its current policy on a fixed schedule, which makes submission counts identical across voters by construction.

\paragraph{Residual coercion channels before the reveal.} Three vectors remain inside the threat model and are not covered by Theorem~\ref{thm:receiptfree}. A last-minute coercer can demand that the dictated ciphertext be posted just before the deadline and verify that nothing follows it. A forced-abstention coercer can demand that no submission appear, which the public ledger makes verifiable. And a coercer who obtains the signing credential of a voter can submit on behalf of the voter outright. All three afflict any scheme with attributable submissions; countering them requires unlinkable submissions through anonymous credentials in the style of JCJ~\cite{juels2005coercion}, a natural companion mechanism rather than part of the present design.

\paragraph{Scope and the post-reveal boundary.} The defense covers coercion that relies on a pre-reveal receipt. It does not cover physical control at the moment of submission, nor an adversary who acts after $T_r$. Once policies are revealed, delegations and fallback ballots become a permanent public record, so a patient vote buyer can verify compliance after the reveal and pay then. Sealing delays verification rather than eliminating it, and the design provides no lasting ballot secrecy. This is acceptable where post-hoc vote publicity is already the norm, such as DAO governance, but falls short of what civic elections require. The reveal also carries information across elections, since the published delegation graph of one round identifies the most-relied-upon delegates of the next, exactly the targets of the failure analysis in Section~\ref{sec: regime_failures}; the evolution of delegation graphs over time is itself a subject of study~\cite{behrens2022temporal}. Submission timing is visible too and can leak, for instance a resubmission shortly after a visit from a coercer; we do not model this side channel. Restricting the reveal to aggregates, or making revealed policies unlinkable through anonymous credentials, would extend protection past the reveal; we leave both to future work.

\section{Experimental Setup and Modeling}
\label{sec:experiments}

Evaluating the behavioral impact of sealed versus transparent delegation requires observing how voters form delegation networks under different information regimes. Because no large-scale sealed delegation system is currently deployed, we evaluate our system using a data-driven experimental framework calibrated on real-world voting and demographic data. This approach allows us to isolate the effects of visibility and herding while keeping the underlying voter preferences empirically grounded. All reported results are averaged over repeated trials with independent random delegate draws, and we report $95\%$ confidence intervals so that differences between conditions are not read off a single run. Repetitions vary the random delegate draws on a fixed electorate, so these intervals quantify simulation variability rather than sampling uncertainty about a voter population, and the paired significance tests we report compare repetitions in that same sense.

In many real-world settings, not all voters participate equally. When decisions are complex, some individuals may abstain because they feel less informed, while others may abstain because they lack time or confidence. Classical models of political participation link abstention to decision cost~\cite{downs1957economic}. Majority abstention is the empirical norm in the settings we target. Average turnout in Swiss federal popular votes was $47\%$ over 2011 to 2024 and lower before~\cite{bfs2025stimmbeteiligung, serdult2014referendums}, participatory-budgeting votes typically draw one to three percent of eligible residents~\cite{doherty2024who}, and the Aarau campaign itself drew $1703$ voters from roughly $14{,}000$ enfranchised residents~\cite{pournaras2025upgrading}. Abstention is also selective, with education among the most consistent predictors of turnout and abstainers known to be unrepresentative~\cite{smets2013embarrassment, lijphart1997unequal, citrin2003what}. Our scenario, in which the lower-expertise half of the electorate abstains, is therefore conservative on the level of abstention and grounded on its direction; these lower-confidence agents abstain from direct voting and instead delegate their voting power.

Whether would-be abstainers would use a delegation channel is less settled. Proxy voting in Dutch national elections is used by about one in ten voters~\cite{odihr2025netherlands}, laboratory participants delegate more readily than rational baselines predict~\cite{campbell2022liquid}, and interface changes that eased delegation across eighteen DAOs increased both delegation and participation~\cite{hall2024what}. No study, however, shows a delegation option converting would-be abstainers in a real election, so we treat this conversion as a design premise and report the recovery as a function of uptake in Section~\ref{sec: delegation_model}.

\subsection{Datasets}
\label{sec: datasets}

Most of our data comes from participatory budgeting (PB), a now widely adopted democratic process in which residents vote to allocate part of a public budget across proposed projects~\cite{cabannes2004participatory}; the design and computational analysis of PB voting rules is an active research area~\cite{aziz2021participatory, peters2021proportional}. Our evaluation is grounded in the Aarau election below; two further sources are used to check that the findings generalize beyond a single electorate.
\begin{itemize}
    \item \textbf{Aarau (Stadtidee)}~\cite{pournaras2025upgrading}: a municipal participatory-budgeting election in which 1704 citizens voted directly, without delegation, on local projects under a fixed budget; 222 of them additionally completed demographic and political surveys. This is our primary dataset, as it uniquely pairs real ballots with a survey rich enough to calibrate a survey-based expertise measure and homophilous delegate selection based on age, district, and political self-placement.
    \item \textbf{Pabulib (20 datasets)}~\cite{pabulib}: twenty real participatory-budgeting elections from Warsaw, Amsterdam, and Budapest, with 6{,}000 to 96{,}000 voters each, all analyzed on their full electorates. These provide ballots and project costs, and the Warsaw instances include voter age and, for some, neighborhood. Lacking a survey, we proxy competence by voter engagement, that is, the number of projects a voter approved.
    \item \textbf{CES 2022 (US elections)}~\cite{ces2022}: the Cooperative Election Study common content, a nationally representative survey of 60{,}000 US adults, all of whom we analyze. We use it to test the expertise hypothesis with an objective competence measure, a political-knowledge battery asking which party controls the U.S.\ House and Senate, together with rich demographics. Here a ``ballot'' is a set of binary support/oppose positions on 34 policy items spanning guns, abortion, health care, and immigration, and the outcome is the per-item majority.
\end{itemize}
The delegation model below is described for Aarau and applied to each dataset using the voter features it provides.

\subsection{Data and Model Calibration}

To ground our experiments empirically, we primarily use a participatory budgeting dataset from Aarau, Switzerland~\cite{pournaras2025upgrading}. In this real-world election, 1704 citizens selected local projects under a fixed budget constraint. Crucially, voters in this election did not delegate; they voted directly. We use their direct ballots as the ground truth preferences. Because the fundamental goal of liquid democracy is to approximate the collective will of the electorate even when individual participation is uneven, the direct aggregation of this complete, un-delegated population serves as a robust empirical baseline. It represents the true democratic preference of the community, allowing us to accurately measure both the representational degradation caused by abstention and the degree of recovery provided by different delegation regimes.

Among the electorate, 222 participants completed additional demographic and political surveys. We use these survey responses to construct a synthetic expertise proxy for each voter, detailed in Appendix~\ref{app: expertise}. We divide these 222 participants by expertise into the top 10\% as high-expertise, the next 40\% as mid-expertise, and the bottom 50\% as low-expertise. In the experiments below, the low-expertise group acts as the delegators.

\paragraph{Independent delegate selection.} \label{par: ind_exp_slc} In our model, delegators tend to choose delegates who are easier for them to trust or recognize. In a real-world liquid democracy platform such as LiquidFeedback, this process is facilitated by a searchable delegate directory. Voters can browse public profiles containing self-disclosed demographic information, neighborhood affiliations, and political leanings. Rather than assuming voters magically identify the objectively best expert, we assume they use this directory to select delegates who are socially, geographically, or politically relatable. We operationalize this using similarity in age, district, and political self-placement. The detailed procedure used for the Stadtidee dataset is given in Appendix~\ref{app: ind_exp_sel}.

\paragraph{Delegate selection with herding.} \label{par: her_exp_slc} In the transparent setting, delegators can additionally observe which delegates are already attracting support. We model this by increasing the probability of choosing a delegate who is already popular. This captures visibility-based reinforcement effects documented in models of information cascades and social influence~\cite{bikhchandani1992theory, banerjee1992simple, barabasi1999emergence, salganik2008leading}, and demonstrated causally in randomized experiments where visible vote counts and early success signals attract further support~\cite{muchnik2013social, vanderijt2014field, glenski2017rating}. The functional form instantiates the preferential delegation model of G\"olz et al., with $h$ in the role of their in-degree exponent~\cite{golz2018fluid}, and measured attachment exponents in adjacent online networks bracket its plausible range, as detailed in Appendix~\ref{app: seq_del_her}. The detailed procedure is given there as well. Operationally, the sealed regime corresponds to independent draws from the homophily weights, with $h=0$, and the transparent regime to sequential draws with $h>0$. The cryptographic layer of Section~\ref{sec:design} enters the simulation only by deleting the popularity signal during formation; the extent to which off-chain signaling could partially restore that signal is discussed in Section~\ref{sec: limitations}.

\paragraph{Personal versus representative delegation.} \label{par: rep_del_mod} Depending on the information regime, delegates may adopt different representation styles. Delegates can have a personal representation style, where they vote entirely based on their own personal preferences. However, in certain real-world information regimes, a platform may provide delegates with anonymized, aggregated demographics of their delegators. Under these conditions, a delegate might adopt a representative style, actively trying to represent their constituency rather than voting solely for their own interests. We model this by allowing the delegate to compute an aggregate preference for their constituency, comprising the delegate and their delegators. In the Stadtidee voting system each voter distributes a fixed budget of points across the projects, so a constituency is summarized by how many points its members assign to each project rather than by a simple yes or no. The delegate then casts a ballot formed in one of the following ways. Under a mean rule, the delegate assigns each project the mean number of points across the constituency, which is positive for every project that any member supported, so the delegate over-includes fringe preferences once ballots are thresholded to approvals for aggregation. Under a median rule, the delegate assigns each project the median number of points across the constituency, which backs a project only when at least half the constituency does and is robust to a few enthusiasts. A third variant, defined for approval-style participatory-budgeting ballots, replaces issue-wise aggregation with an internal election. The constituency elects a bundle of $k$ projects by sequential proportional approval voting over the approval sets of its members, where $k$ is the median number of projects a member supports, and the delegate casts an approval ballot backing exactly that bundle, so that the representative ballot has the density of a typical member ballot and represents internal minorities proportionally. We evaluate all three, since the choice turns out to materially affect the result (Section~\ref{sec: rep_style}).

\subsection{Outcome Accuracy Metric and Aggregation Rule}
\label{sec: metric_rules}

\paragraph{Accuracy metric.} We treat the outcome produced by all 1704 voters as the ground-truth benchmark. Rather than comparing outcomes only by project overlap, we measure outcome accuracy as the pairwise agreement between an outcome and the benchmark, using a Condorcet-based pairwise measure~\cite{majumdar2024generative}. For every unordered pair of projects $(a,b)$, we determine which project wins in pairwise comparison under both the benchmark and the outcome of interest; accuracy is the fraction of project pairs on which the two outcomes agree. We use the term accuracy consistently throughout, where higher is better and $1.0$ denotes perfect agreement with the full electorate.

\paragraph{Aggregation rule.} The winning bundle depends on how ballots are aggregated under the budget constraint. We use the Method of Equal Shares (MES), a proportional rule with strong fairness guarantees for participatory budgeting~\cite{peters2021proportional}, which citizens also perceive as fairer than the conventional greedy alternative in a pre-registered behavioral study of Z\"urich participatory budgeting~\cite{yang2024designing}. All results use MES with a fixed budget of CHF~50{,}000. Our implementation consumes approval ballots, treating a project as approved by a voter whenever they assigned it any points, so point magnitudes enter the analysis only through the representative summarization rules of Section~\ref{par: rep_del_mod}. We do not report sum-based rules such as utilitarian greedy, because under them the mean-representative regime reconstructs the point totals of the full electorate by construction, since counting the constituency mean once per member reproduces the constituency sum exactly, and it scores a deterministic $0.998$, an arithmetic identity rather than a finding.

\subsection{Delegate Failure Model}
\label{sec: failure_model}

Delegation concentrates the responsibility for casting many ballots onto a small set of delegates, so the reliability of the outcome depends on those delegates actually voting. We therefore model delegate failure explicitly. A failed delegate submits no ballot before the tallying deadline, for example due to inactivity, abstention, technical failure, or a denial-of-service attack, and consequently both their own vote and the votes routed to them are at risk. Such failures are common in practice; across eighteen DAOs, delegates vote on only about a third of proposals on average~\cite{hall2024what}, and Section~\ref{sec:related} documents targeted interference with the most relied-upon voting agents in deployed systems. We study two failure modes:
\begin{enumerate}
    \item \textbf{Uniform random failures:} a fraction $p_{\mathrm{fail}}$ of delegates, chosen uniformly at random, fail. This captures benign unreliability.
    \item \textbf{Targeted failures:} the $p_{\mathrm{fail}}$ fraction of delegates with the highest delegated voting power fail. This captures an adversary who attacks the most consequential hubs, or correlated failure of the most-relied-upon delegates.
\end{enumerate}

Against these we compare three designs. The single-delegate baseline uses only the first-ranked delegate. Ranked delegation lets each voter rank up to three delegates and reroutes them to their highest-ranked surviving delegate. Ranked delegation with a personal fallback ballot additionally lets a voter whose ranked delegates all fail revert to their own direct ballot. Because not every voter can be expected to prepare a full direct ballot, we treat the fraction $\phi$ of delegators who provide a fallback ballot as an experimental parameter rather than assuming universal provision. We also vary the information regime under which the delegation graph itself is formed, so that failures strike graphs formed under sealed and under transparent delegation. We assume the resolution itself, comprising decryption, rerouting, and tallying, is executed by the public reveal-phase protocol of Section~\ref{sec:reveal}; failure pertains only to whether a delegate submits a valid ballot, not to the integrity of the tally. For variance reduction, the failure experiments reuse common random draws across failure fractions and ballot designs.

\section{Experimental Evaluation}
\label{sec:evaluation}

Section~\ref{sec:related} identified two vulnerabilities of transparent liquid democracy, influence over delegation choices while they are being formed and systemic fragility once delegated power concentrates. The evaluation tests whether the proposed mechanism addresses both, and it is organized around three questions. First, when does delegation improve representational accuracy at all, and does sealing cost any of it? Second, what does the information regime do to the structure of the delegation network? Third, what happens to the electorate when delegates fail? Sections~\ref{sec: delegation_model} to~\ref{sec: failures} answer these questions in turn and Table~\ref{tab:eval_overview} summarizes the experiments and findings; Sections~\ref{sec: sensitivity} and~\ref{sec: limitations} close with sensitivity checks and limitations. Throughout, accuracy is the agreement metric of Section~\ref{sec: metric_rules}, reported as a mean over $50$ repetitions for the regime comparisons and $30$ for the failure analyses, with $95\%$ confidence intervals in the simulation-variability sense of Section~\ref{sec:experiments}.

\begin{table}[t]
\centering
\small
\caption{Overview of the evaluation. Each question maps to one subsection.}
\label{tab:eval_overview}
\resizebox{\textwidth}{!}{%
\begin{tabular}{p{4.0cm}p{4.8cm}p{5.6cm}}
\toprule
Question & Experiments & Headline finding \\
\midrule
When does delegation improve accuracy? (\S\ref{sec: delegation_model}) & Aarau regimes, styles, and uptake sweep; twenty Pabulib elections; CES with objective knowledge & Only when abstention is large and skewed and uptake is substantial; sealing costs no accuracy; the safer style depends on the electorate \\
What does transparent formation do to the network? (\S\ref{sec: power_concentration}) & Herded formation in the data-grounded and transitive models; arrival-order stability & Concentration up to oligarchy; outcomes become a lottery on arrival order \\
What happens when delegates fail? (\S\ref{sec: failures}) & Uniform and targeted failures on Aarau and twenty elections; fallback-provision sweep; sealed versus transparent graphs & Targeted damage grows with concentration; ranked delegation with fallback caps vote loss at a $3\%$ floor \\
\bottomrule
\end{tabular}}
\end{table}

\subsection{When Does Delegation Improve Representational Accuracy?} \label{sec: delegation_model}

The promise of liquid democracy is that voters who would otherwise abstain remain represented through their delegates. Before asking what sealing protects, we establish what delegation itself delivers, and find on our primary electorate and on both generalization checks that it helps only under conditions we can state precisely.

\paragraph{Selective abstention is costly.} Before introducing abstention or delegation, we verify that the 222 survey respondents are broadly representative of the full electorate. Aggregating only their ballots reproduces 96\% of the pairwise comparisons in the 1704-voter benchmark, indicating strong alignment between the subsample and the full population. We then consider a counterfactual in which the least-expert 50\% of the 222-voter subgroup abstains. When only the remaining ballots are aggregated under MES, outcome accuracy drops to $0.729$. This does not imply that all real abstention is caused by low expertise; rather, it shows that if one subgroup is systematically less likely to participate, the collective outcome can change substantially. This $0.729$ is the baseline that delegation must improve upon.

\paragraph{Delegation recovers the loss on Aarau, in proportion to uptake, and sealing costs none of it.} We next ask whether delegation can recover the accuracy lost to abstention. Abstainers in the bottom 50\% delegate to experts in the top 10\% using the homophilous selection mechanism (Section~\ref{par: ind_exp_slc}); we vary the information regime between sealed and transparent, the representation style between personal and representative, and, because delegation uptake by would-be abstainers is a design premise rather than an established fact, the fraction $\rho$ of abstainers who actually use the channel. Figure~\ref{fig:accuracy} reports outcome accuracy across $\rho$ with $95\%$ confidence intervals over $50$ repetitions; recovered subsets are nested so that curves are comparable across uptake levels. At full uptake, delegation clearly helps. Under MES, sealed personal delegation reaches an accuracy of $0.790 \pm 0.009$, a statistically significant improvement of $+0.061$ over the $0.729$ abstention baseline (Wilcoxon signed-rank, $p < 10^{-14}$), and the median representative ballot reaches $0.827$. The internal-election representative stays below the personal regimes at every uptake level and reaches only $0.742 \pm 0.016$ at full uptake; Section~\ref{sec: rep_style} explains why, and why the same rule behaves very differently on other electorates. The recovery is not linear in uptake. At $\rho = 0.1$, roughly the uptake of proxy voting in deployed national elections, delegation recovers nothing, and personal delegation dips slightly below the baseline, to $0.712 \pm 0.006$, since a small recovered sample adds a noisy, homophily-skewed bloc rather than a representative one. Personal delegation breaks even only around $\rho \approx 0.2$ to $0.3$, most of the recovery arrives between $\rho \approx 0.4$ and $\rho \approx 0.7$, and the curves plateau thereafter, with the median representative ballot dominating from $\rho \approx 0.2$ onward. The effect of the information regime, by contrast, is small at every uptake level. At full uptake, transparent personal delegation reaches $0.798 \pm 0.011$, statistically indistinguishable from sealed personal delegation ($\Delta = -0.007$, $p = 0.37$), and the sealed and transparent curves overlap across the whole sweep. Once averaged over many delegation draws, sealing the formation phase neither gains nor costs accuracy; a single run can suggest a sealed advantage, but none survives repeated trials. The benefits of sealing lie instead in the delegation network it produces, as Sections~\ref{sec: power_concentration} and~\ref{sec: failures} show.

\begin{figure}[h!]
\centering
\includegraphics[width=\textwidth]{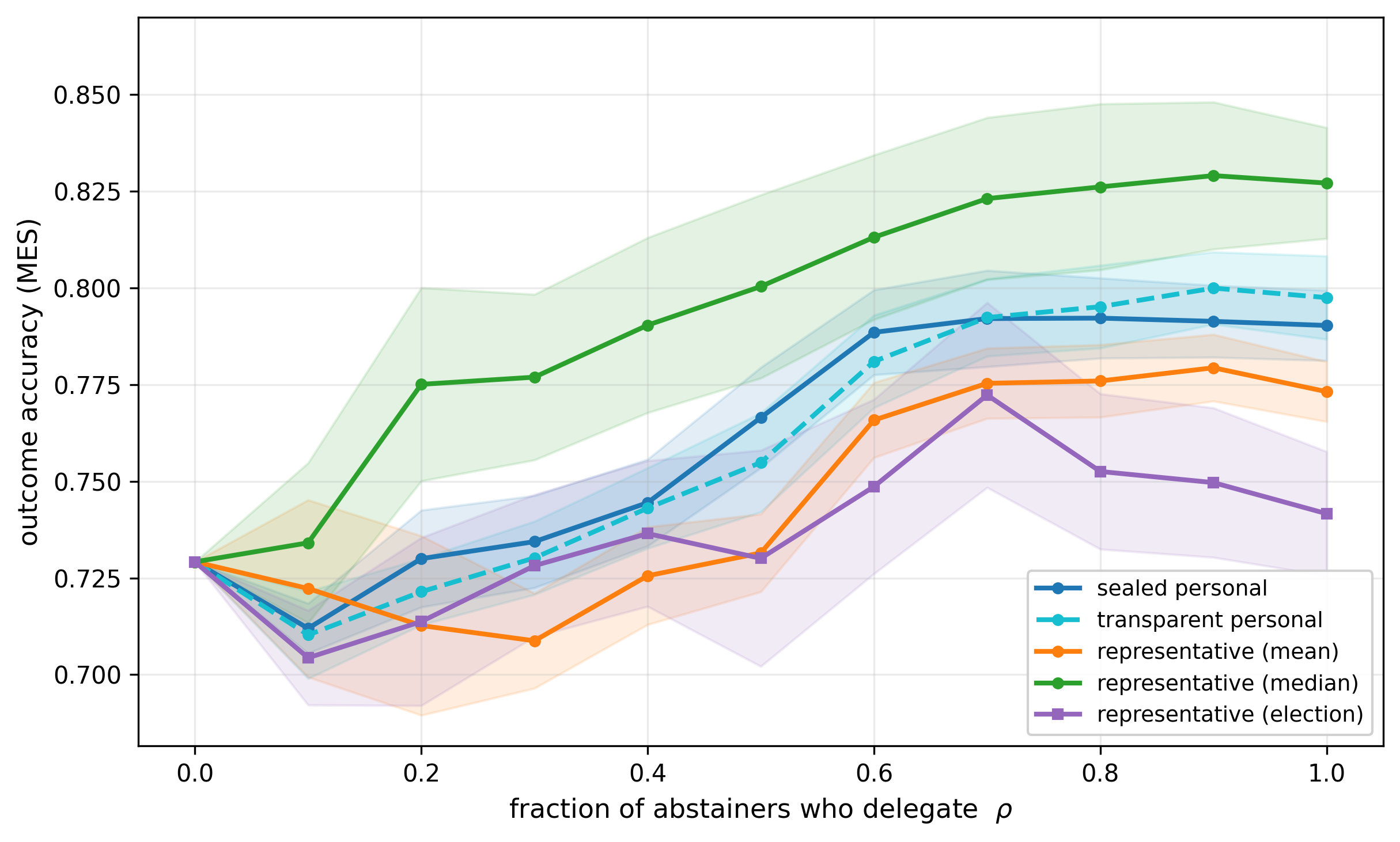}
\caption{Outcome accuracy (agreement with the 1704-voter ground truth) as a function of the fraction $\rho$ of abstainers who use the delegation channel, under the Method of Equal Shares, mean over 50 repetitions with $95\%$ confidence bands; recovered subsets are nested so that curves are comparable across $\rho$. At $\rho=0$ only the top-50\% vote, giving the $0.729$ abstention baseline; at $\rho=1$ every abstainer delegates. The mean rule and median rule refer to how a representative delegate turns the point ballots of its constituency into its own point ballot, assigning each project the mean or median number of points across the constituency respectively; under the election rule the constituency instead elects a bundle of typical member-ballot size by proportional approval voting. Recovery requires substantial uptake, the median representative ballot dominates from $\rho \approx 0.2$, the internal election trails on this electorate although it is the safest summarization on sparse-approval electorates (Section~\ref{sec: rep_style}), and the sealed and transparent personal curves are statistically indistinguishable at every uptake level.}
\label{fig:accuracy}
\end{figure}

\paragraph{The recovery is conditional across electorates.} The Aarau recovery is real but turns out to be conditional. We re-ran the comparison on twenty further participatory-budgeting elections from Pabulib and on 60{,}000 US voters from the CES~2022 study, the latter using an objective political-knowledge measure rather than a survey or engagement proxy, as reported in Table~\ref{tab:crossdata}. Delegation recovers accuracy in Aarau, by $+0.061$, but not elsewhere. Across the twenty Pabulib elections, analyzed on their full electorates, personal delegation is essentially neutral, with a median change of $0.00$ and seventeen of twenty elections within $\pm 0.02$ of the abstention baseline. On CES, our most credible test, personal delegation to experts reduces accuracy from $0.971$ to $0.941$, a penalty of exactly one policy item out of $34$, while representative-style delegation is neutral. The explanation is that delegation can only recover accuracy in proportion to the recoverable gap $1-(\text{abstention accuracy})$. This gap is large in Aarau, at $0.271$, because its survey-defined low-expertise abstainers are systematically unrepresentative, but smaller everywhere else, between $0.03$ and $0.25$, and the only Pabulib elections where personal delegation yields a positive effect are those with the largest gaps. When the non-participating group already matches the electorate, there is little to recover, and importing the demographically and ideologically skewed preferences of high-competence experts via personal delegation pushes the outcome away from the electorate.

\begin{table}[t]
\centering
\caption{Does delegation recover the accuracy lost to abstention? Changes are relative to the abstention baseline of each dataset. The benefit tracks the recoverable gap ($1-$abstention accuracy) and appears only for Aarau, whose survey-defined abstainers are strongly unrepresentative. All participatory-budgeting results use MES on full electorates; CES uses per-item majorities on 34 policy items and is our most credible test, though its accuracy moves in steps of $0.029$. The election column is the internal proportional election of Section~\ref{sec: rep_style}, defined for approval-style participatory-budgeting ballots and hence not applicable to CES.}
\label{tab:crossdata}
\resizebox{\textwidth}{!}{%
\begin{tabular}{llcccc}
\hline
Dataset & Competence measure & Abstention acc. & $\Delta$ personal & $\Delta$ representative & $\Delta$ rep.\ election \\
\hline
Aarau (222) & survey expertise & 0.729 & $+0.061$ & $+0.098$ (median rule) & $+0.013$ \\
Pabulib (20 PB, full) & engagement & 0.75--0.97 & $0.00$ median ($-0.12$ to $+0.02$) & $-0.04$ median ($-0.22$ to $+0.05$) & $0.00$ median ($-0.12$ to $+0.09$) \\
CES 2022 (60k, US) & objective knowledge & 0.971 & $-0.029$ & $0.000$ & -- \\
\hline
\end{tabular}}
\end{table}

\subsubsection{Representation Style Is a Second Moderator} \label{sec: rep_style}

Representation style has a larger and more nuanced effect than the information regime. Forcing experts to cast the mean point ballot of their constituency lowers accuracy to $0.773 \pm 0.008$, slightly but significantly below personal voting ($\Delta = +0.017$, $p < 0.05$), because the mean is positive for any project even a single member supported, so the delegate over-includes fringe preferences. Using the median instead, so that the delegate backs a project only when a majority of its constituency does, reverses the comparison and raises accuracy to $0.827 \pm 0.014$, the strongest single regime and significantly above the mean variant ($\Delta = +0.054$, $p < 10^{-6}$). This gain comes from how the representative ballot is formed rather than from the information regime, since the transparent median representative reaches an indistinguishable $0.826 \pm 0.017$. How a delegate summarizes its constituency is therefore not a cosmetic choice.

On Aarau, then, a representative delegate that reflects majority support within its constituency recovers more accuracy than any other regime. The same summarization becomes a liability where there is little to recover. On Pabulib the median representative ballot is the majority ballot of the constituency that the delegate represents, which approves a project only when at least half the constituency does. For a small and heterogeneous constituency this ballot is far sparser than the ballot of a typical voter; in the 2024 city-wide Warsaw election, voters approve $8.7$ projects on average while representative ballots approve $2.8$. Replacing typical ballots with sparse majority ballots changes the outcome, and below the gap threshold any change is scored as error, so representative delegation is most harmful precisely in the elections with the least to recover, with changes of $-0.15$ to $-0.22$ in the near-ceiling city-wide Warsaw elections, a median of $-0.04$ across the twenty elections, and neutral-to-positive effects only at the largest gaps. The robust, dataset-independent conclusion is therefore narrower than any single number suggests. Expertise-based delegation improves representational accuracy only when abstention is large and systematically skewed, and neither delegation style is uniformly safer. Representative-style delegation is safer where abstention is skewed or the competence elite is ideologically distinct, as in Aarau and CES, while on typical participatory-budgeting electorates with near-representative abstention, personal delegation is neutral and sparse majority ballots can distort the outcome.

The internal election is the constructive answer to this failure mode. On Aarau it reaches only $0.742 \pm 0.016$ at full uptake, below every other regime, because proportionality hands some of the $k$ bundle slots to fringe picks that the median rule filters out and then backs them with the full weight of the constituency. On the twenty Pabulib elections the ordering reverses. The internal election beats the majority ballot in seventeen of twenty elections, is neutral on average, with a mean change of $0.000$ against $-0.058$ for the majority ballot, and halves the worst-case harm, from $-0.22$ to $-0.12$; in the three elections with the largest recoverable gaps it recovers $+0.07$ to $+0.09$, the only regime to produce sizable positive recovery on Pabulib at all. On dense point ballots over few projects, where the choices of members overlap heavily, consensus filtering wins; on sparse approval ballots over many projects, where majority agreement within a constituency is rare, the calibrated-density proportional bundle is the safer summarization.

\paragraph{Takeaway.} Delegation improves representational accuracy only when abstention is large and systematically skewed and when enough abstainers actually use the channel, since below roughly a fifth uptake the recovery is nil, and the safer delegation style depends on the electorate, the competence signal, and the ballot structure. Neither condition is observable before an election, since the recoverable gap requires the counterfactual ballots of the abstainers. A deployment therefore cannot count on delegation to improve outcomes. What a design can do is bound the damage when conditions are unfavorable, and the rest of the evaluation shows that the two levers our mechanism controls, the information regime and the ballot design, do exactly that.

\subsection{What Does Transparent Formation Do to the Delegation Network?} \label{sec: power_concentration}

Beyond outcome accuracy, the structure of the delegation graph determines how exposed the system is to manipulation and failure. We measure voting-power concentration using the Gini coefficient of delegated voting power in the data-grounded electorate, and the effective number of controlling delegates together with the maximum transitive voting share in the abstract model below. In the data-grounded two-level electorate, in which non-experts delegate to a fixed pool of experts, transparency increases concentration only modestly, because the pool of eligible delegates is small and delegation is not transitive. The Gini coefficient rises from $\approx 0.44$ with no herding to $\approx 0.48$ under strong herding at $h=1.5$, as shown in Section~\ref{sec: sensitivity}.

The effect becomes severe once delegation is allowed to be transitive, delegates themselves delegating to more popular delegates, as in deployed platforms. To isolate this mechanism, we complement the data-driven study with an abstract agent-based model in which visibility drives sequential delegate selection over multiple hops, and voting share is measured transitively along delegation chains, for example A $\to$ B $\to$ C. In this regime the transparent setting collapses into an oligarchy, which is the signature behaviour of strongly superlinear attachment, under which a single node comes to connect to nearly all others~\cite{krapivsky2000connectivity}. Over $15$ runs of the abstract model, the effective number of controlling delegates falls to approximately $1$ and the maximum transitive voting share of a single delegate approaches $1.0$, as nearly all delegation chains funnel into one ``super-delegate'' (Figure~\ref{fig:transparent_graph}a). Under sealing, by contrast, power remains spread across roughly $11$ effective delegates with a maximum share near $0.2$ (Figure~\ref{fig:transparent_graph}b). This rich-get-richer dynamic matches the behavior observed in early LiquidFeedback deployments, where delegations were visible to members, the distribution of received delegations was heavy-tailed, and the concentration of received delegations rose over time as super-voters emerged~\cite{kling2015voting}. It also mirrors the extreme concentration of delegated voting power documented in transparent on-chain DAO governance~\cite{fritsch2024analyzing}. A recent process-based model argues that an opposing force also operates in practice, since retractable and adaptive delegations make the influence derived from transitive chains decay exponentially~\cite{grossi2025delegations}; our results concern the formation phase of a single decision, before retraction can act, so the two mechanisms are complementary rather than contradictory.

The sealed regime, by removing the visibility signal during formation, prevents this coordination and keeps voting power distributed among local experts. Crucially, this difference in the delegation network exists even where the two regimes are indistinguishable in raw outcome accuracy (Section~\ref{sec: delegation_model}). Sealing protects the health and attack-resistance of the delegation graph rather than the accuracy of a single uncontested tally, as the failure analysis below makes concrete.

\begin{figure}[h!]
\centering
\includegraphics[width=\textwidth]{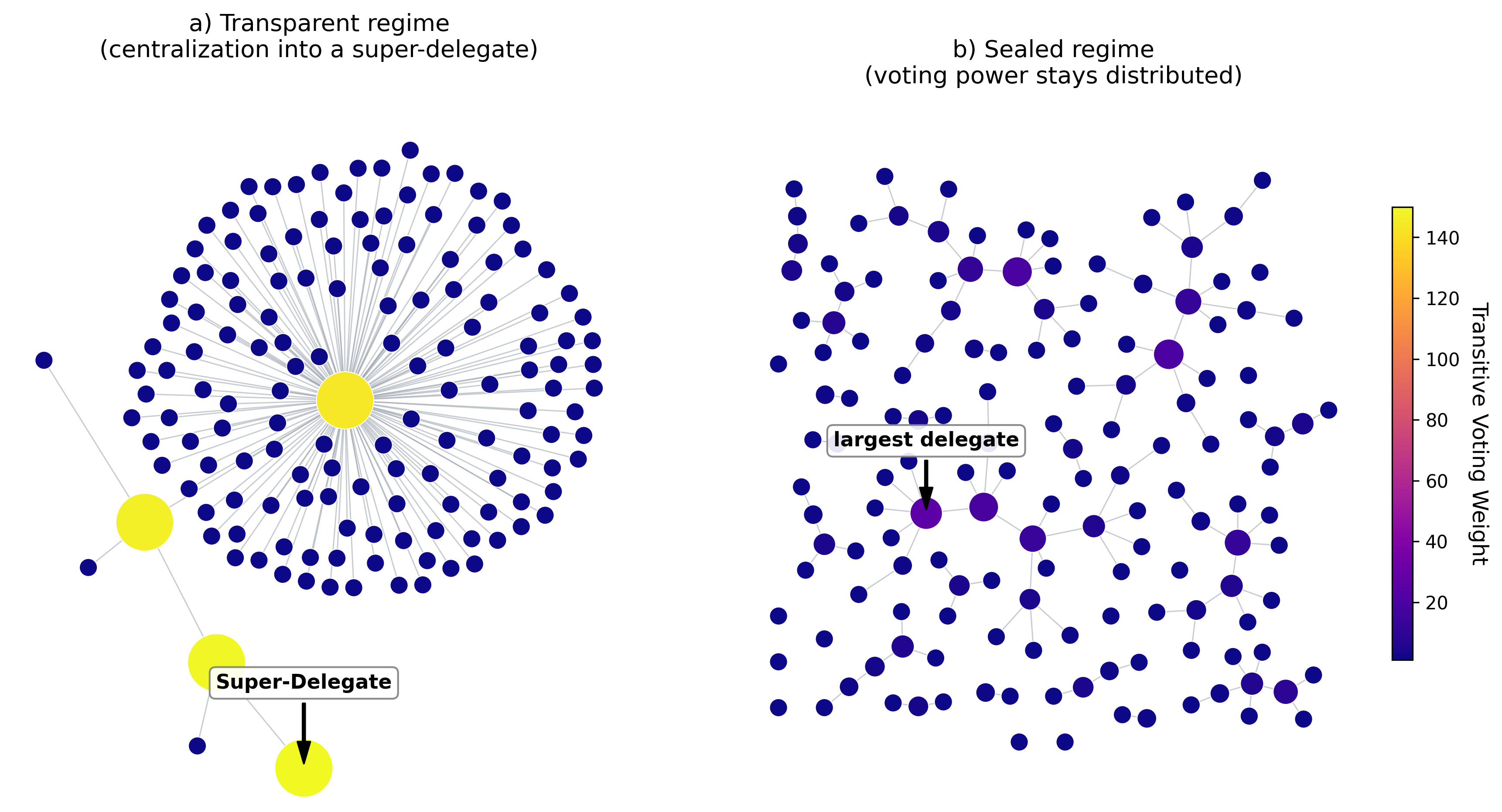}
\caption{Delegation graphs in the transitive model; node color and size encode transitive voting weight. (a)~Under the transparent regime, the rich-get-richer visibility dynamic causes massive centralization, with almost all delegation chains funnelling into a single ``super-delegate.'' (b)~Under the sealed regime, voting power remains distributed across many local delegates.}
\label{fig:transparent_graph}
\end{figure}

\subsubsection{Outcome Stability: The Arrival-Order Lottery} \label{sec: stability}

Because transparent delegates are chosen sequentially and popularity reinforces popularity, which delegate ends up controlling a herd depends on the arbitrary order in which voters happen to act, making the realized outcome partly a lottery on that order. We quantify this by re-running the election with many independent random orderings. Figure~\ref{fig:pathdep} reports mean accuracy with its spread across runs as the herding exponent $h$ grows. Sealed delegation, whose selections do not depend on order, is stable throughout, with a run-to-run accuracy standard deviation of $0.03$ at every $h$; since $h$ has no effect when delegations are hidden, the sealed band reflects delegate-draw randomness only. Under transparency the spread roughly triples to $0.09$, the dissimilarity between winning bundles of different runs rises from $0.08$ to $0.42$, and mean accuracy itself drifts from $0.79$ to $0.72$ as strong herding concentrates the electorate on whichever delegate the cascade happened to favor. Visibility during formation thus makes the collective decision both less reproducible and, under strong herding, less accurate.

\begin{figure}[t]
\centering
\includegraphics[width=\textwidth]{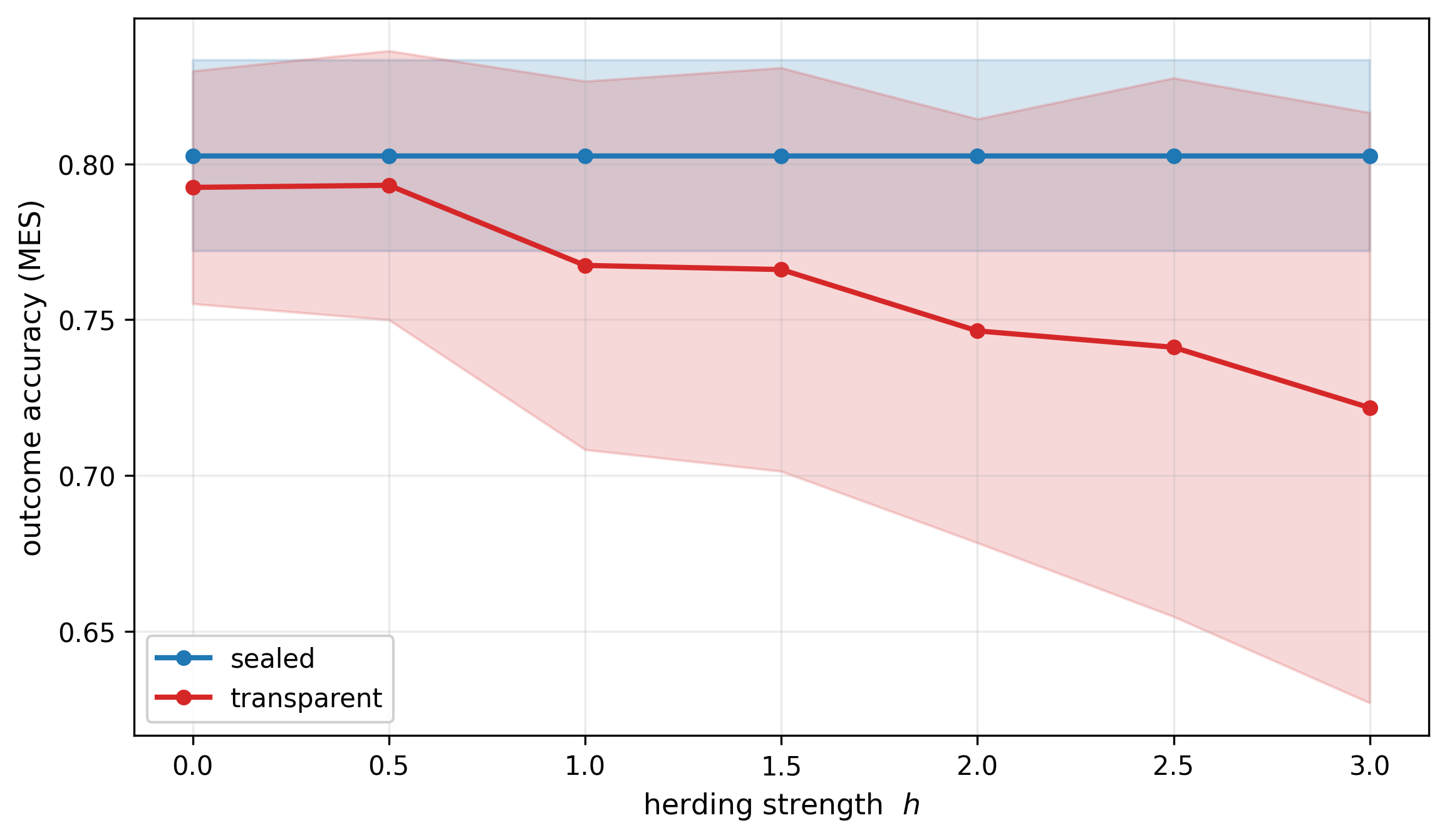}
\caption{Outcome stability under sealed versus transparent delegation as the herding strength $h$ grows. Lines show mean outcome accuracy over 40 independent runs with random arrival orders; shaded bands show one standard deviation across runs. Sealed delegation is order-independent and stays tight and flat, whereas under transparency the spread widens and mean accuracy drifts down, so the outcome becomes an increasingly arbitrary lottery on arrival order.}
\label{fig:pathdep}
\end{figure}

\paragraph{Takeaway.} Even where sealed and transparent formation are indistinguishable in accuracy, they produce very different networks. Transparent formation concentrates delegated power, up to oligarchy once delegation is transitive, and makes the realized outcome depend on the arbitrary order in which voters act. Sealing removes the signal that drives both effects. Concentration is also the source of fragility, since the more voting power funnels into a few delegates, the more damage a small number of failures, especially targeted ones, can do. Section~\ref{sec: regime_failures} demonstrates this link directly by attacking graphs formed under each regime.

\subsection{What Happens When Delegates Fail?} \label{sec: failures}

We now ask what happens to the election outcome when delegates fail. Using the failure model of Section~\ref{sec: failure_model}, we remove a fraction $p_{\mathrm{fail}}$ of delegates, uniformly at random or targeting the most-relied-upon delegates, and measure two quantities for the single-delegate, ranked, and ranked-plus-fallback designs, namely the resulting outcome accuracy and the vote loss, or abstention, rate.

\begin{figure}[h!]
\centering
\includegraphics[width=\textwidth]{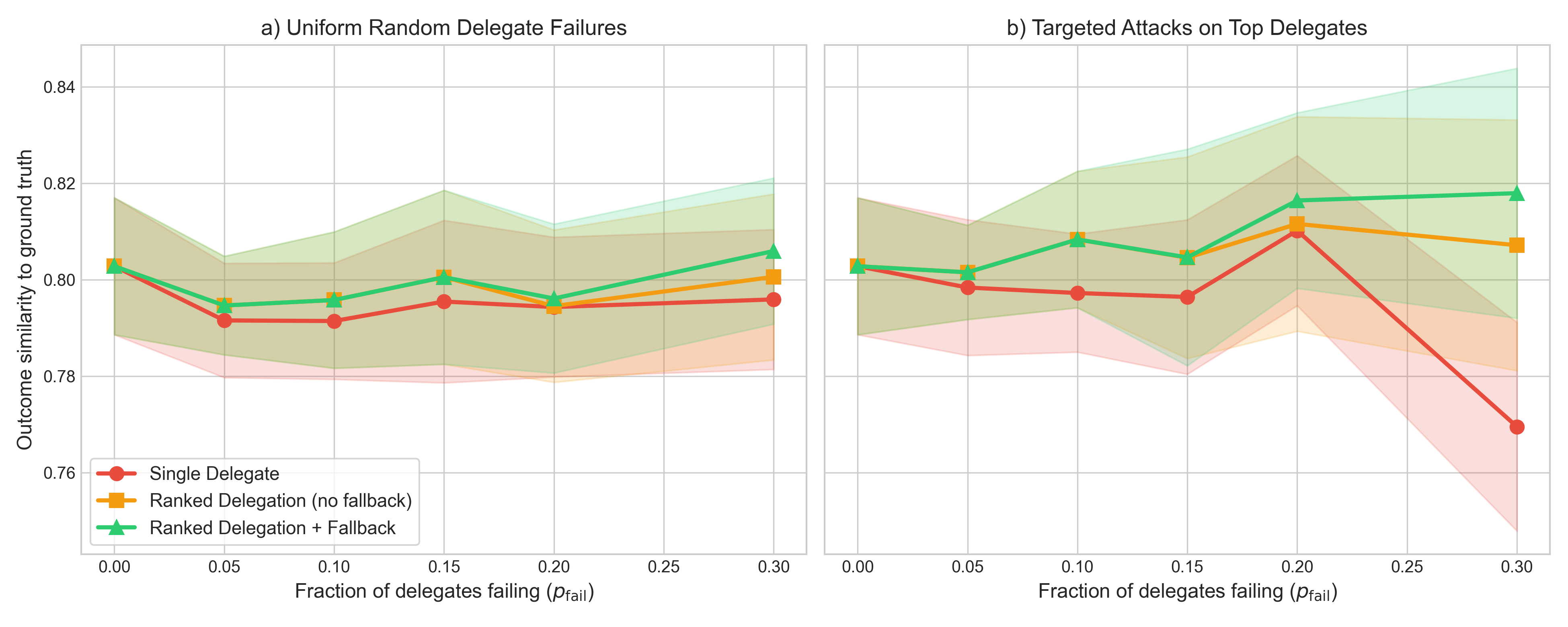}
\caption{Outcome accuracy as a function of the fraction of failing delegates, for uniform random failures (left) and targeted attacks on the most-relied-upon delegates (right). Shaded bands are $95\%$ confidence intervals over 30 repetitions. Under targeted attacks, the single-delegate design loses accuracy while ranked delegation with a personal fallback ballot preserves it.}
\label{fig:failure_accuracy}
\end{figure}

\paragraph{Outcome accuracy.} Figure~\ref{fig:failure_accuracy} reports outcome accuracy as $p_{\mathrm{fail}}$ grows. Under uniform random failures, accuracy is remarkably stable for all three designs, staying within roughly $0.79$ to $0.81$; losing a random handful of delegates does not systematically bias the outcome, because the surviving electorate still carries the aggregate signal. Under targeted attacks the difference becomes clear. At $p_{\mathrm{fail}}=0.3$ the single-delegate design degrades to $0.770 \pm 0.022$, whereas ranked delegation with a personal fallback ballot preserves accuracy at $0.818 \pm 0.026$, the highest at every failure level. Ranking provides alternative routing, and the fallback ballot guarantees that a voter whose ranked delegates are all removed still contributes their own preference instead of dropping out.

\paragraph{Vote loss.} The designs differ even more sharply in how much of the electorate they disenfranchise (Table~\ref{tab:loss}). Under targeted attacks at $p_{\mathrm{fail}}=0.3$, the single-delegate design loses $30.2\%$ of ballots, ranked delegation loses $7.7\%$, and ranked delegation with fallback loses only $2.7\%$; under uniform failures the single-delegate loss is $17.1\%$. A coarse accuracy metric can therefore mask substantial disenfranchisement. It is the combination of ranked delegation and a personal fallback ballot that simultaneously minimizes vote loss and preserves outcome accuracy, precisely under the targeted, correlated failures that power concentration makes most dangerous.

\begin{table}[t]
\centering
\caption{Vote loss (fraction of the electorate left unresolved) at $p_{\mathrm{fail}}=0.3$, mean over 30 repetitions. Ranked delegation with a personal fallback ballot is the only design robust to targeted attacks.}
\label{tab:loss}
\begin{tabular}{lcc}
\hline
Design & Uniform failures & Targeted failures \\
\hline
Single delegate & 17.1\% & 30.2\% \\
Ranked delegation & 3.3\% & 7.7\% \\
Ranked delegation + fallback ballot & 2.7\% & 2.7\% \\
\hline
\end{tabular}
\end{table}

Delegation can improve outcomes by routing votes toward better-informed agents, but it concentrates risk. Ranked multi-delegation combined with a personal fallback ballot turns delegation from a fragile single-point-of-failure structure into a resilient routing mechanism that preserves both participation and outcome accuracy, even under targeted attacks.

\paragraph{The $3\%$ floor follows from the routing arithmetic; fallback provision is what matters empirically.} When every delegator supplies a fallback ballot, the vote loss of the ranked-plus-fallback design is determined by the routing arithmetic itself. Every voter whose delegates all fail reverts to their own ballot, so the only unresolved votes are those of the failed delegates themselves, and the loss equals $p_{\mathrm{fail}}$ times the delegate share of the electorate, about $3\%$ at $p_{\mathrm{fail}}=0.3$, independent of the aggregation rule and of the electorate. Applying the same failure model to all twenty Pabulib datasets on their full electorates confirms this identity exactly, reducing the loss of a single-delegate system from $26.4\%$ to $3.0\%$ under targeted attacks and from $17.9\%$ to $3.0\%$ under uniform failures, as shown in Figure~\ref{fig:failures_multi}. The consistency across elections should therefore be read as a guarantee of the design rather than as accumulating independent evidence. What is empirically contingent is whether delegators provide fallback ballots at all. Figure~\ref{fig:fallback_provision} sweeps the fraction $\phi$ of delegators who do. Vote loss under targeted attacks falls linearly from $5.2\%$ at $\phi=0$, where the design reduces to ranked delegation, to the $3.0\%$ floor at $\phi=1$ on Pabulib, and from $7.4\%$ to $2.7\%$ on Aarau, where outcome accuracy also rises from $0.789$ to $0.803$. Under sealed formation the stakes of incomplete provision are therefore modest, but Section~\ref{sec: regime_failures} shows they grow sharply once the delegation graph is concentrated.

\begin{figure}[h!]
\centering
\includegraphics[width=\textwidth]{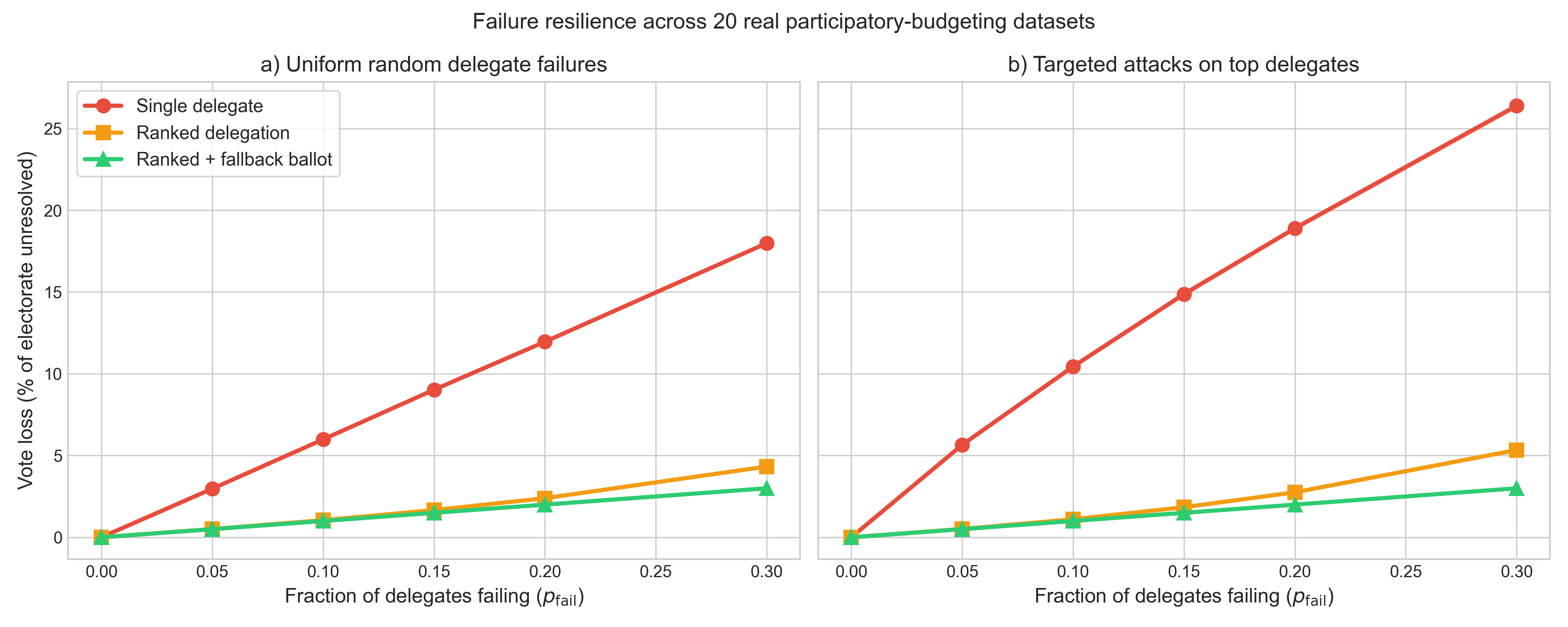}
\caption{Vote loss under delegate failures, averaged over 20 real participatory-budgeting datasets (Pabulib) on their full electorates, with $95\%$ confidence intervals across datasets. Ranked delegation with a personal fallback ballot keeps vote loss near $3\%$ even when $30\%$ of delegates are targeted, versus $26\%$ for a single-delegate design. The effect is highly consistent across all twenty elections.}
\label{fig:failures_multi}
\end{figure}

\begin{figure}[h!]
\centering
\includegraphics[width=\textwidth]{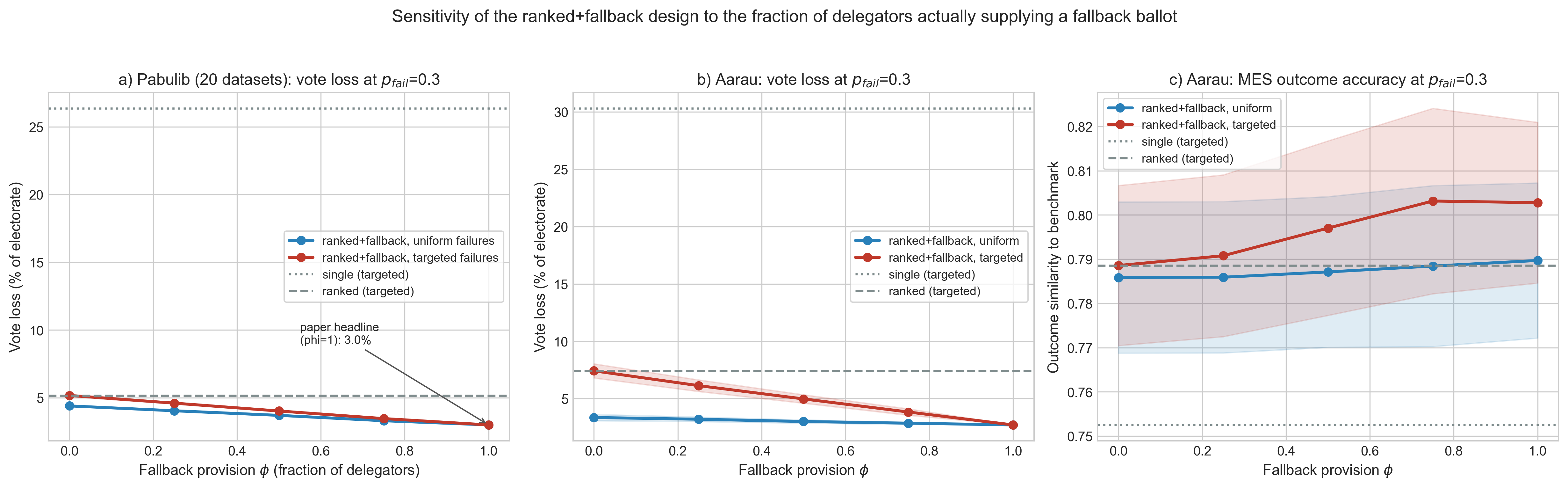}
\caption{Sensitivity of the ranked-plus-fallback design to the fraction $\phi$ of delegators who actually supply a fallback ballot, at $p_{\mathrm{fail}}=0.3$. Vote loss falls linearly in $\phi$ toward the floor of $p_{\mathrm{fail}}$ times the delegate share, on the twenty Pabulib elections (left) and on Aarau (middle), and Aarau outcome accuracy rises with $\phi$ (right). The single-delegate and ranked-only designs are shown as reference lines.}
\label{fig:fallback_provision}
\end{figure}

\subsubsection{From Concentration to Fragility: The Chain Tested End to End} \label{sec: regime_failures}

The failure analysis above holds the delegation graph fixed and varies the ballot design. The remaining question is whether the information regime under which the graph is formed changes how much damage delegate failures do. Sections~\ref{sec: power_concentration} and~\ref{sec: failures} establish the two links separately, transparency concentrates power and concentrated power amplifies targeted failures; here we test the chain end to end.

We form delegation graphs under the sealed regime, with independent homophilous draws, and under the transparent regime, with sequential herded selection at $h \in \{0.3, 1.0, 1.5\}$, using paired random seeds so that regimes are compared on identical populations. Under the transparent regime, the ranked list of each voter is sampled at their arrival turn from the herding-scaled weights, so the graph itself embodies the popularity cascade. For each formed graph we then apply the failure model of Section~\ref{sec: failure_model}, targeting the delegates with the highest realized delegated power in that particular graph, and resolve ballots under each design. We run $30$ paired repetitions on Aarau and five representative Pabulib elections over $5$ seeds, the latter subsampled to $2{,}500$ voters for this experiment.

Figure~\ref{fig:regime_failures} reports the result. Herding concentrates the formed graphs monotonically. On Aarau the share of first-choice delegations held by the largest delegate rises from $0.14$ under sealing to $0.64$ at $h=1.5$, with the Gini of delegated power rising from $0.34$ to $0.73$. This concentration translates directly into fragility. Under targeted failures at $p_{\mathrm{fail}}=0.3$ with single delegation, vote loss rises from $30.2\%$ on sealed graphs to $50.6\%$ at $h=1.5$ on Aarau, and from $26.2\%$ to $48.9\%$ on Pabulib. Ranked delegation without a fallback shows the same regime dependence, rising from $7.4\%$ to $24.6\%$ on Aarau and from $5.2\%$ to $34.4\%$ on Pabulib, and outcome accuracy degrades in step, from $0.752$ to $0.712$ for single delegation on Aarau. Only the ranked-plus-fallback design is regime-independent, holding the loss at the arithmetic floor above under every regime, because universal fallback provision reroutes every stranded vote no matter how concentrated the graph is.

\begin{figure}[h!]
\centering
\includegraphics[width=\textwidth]{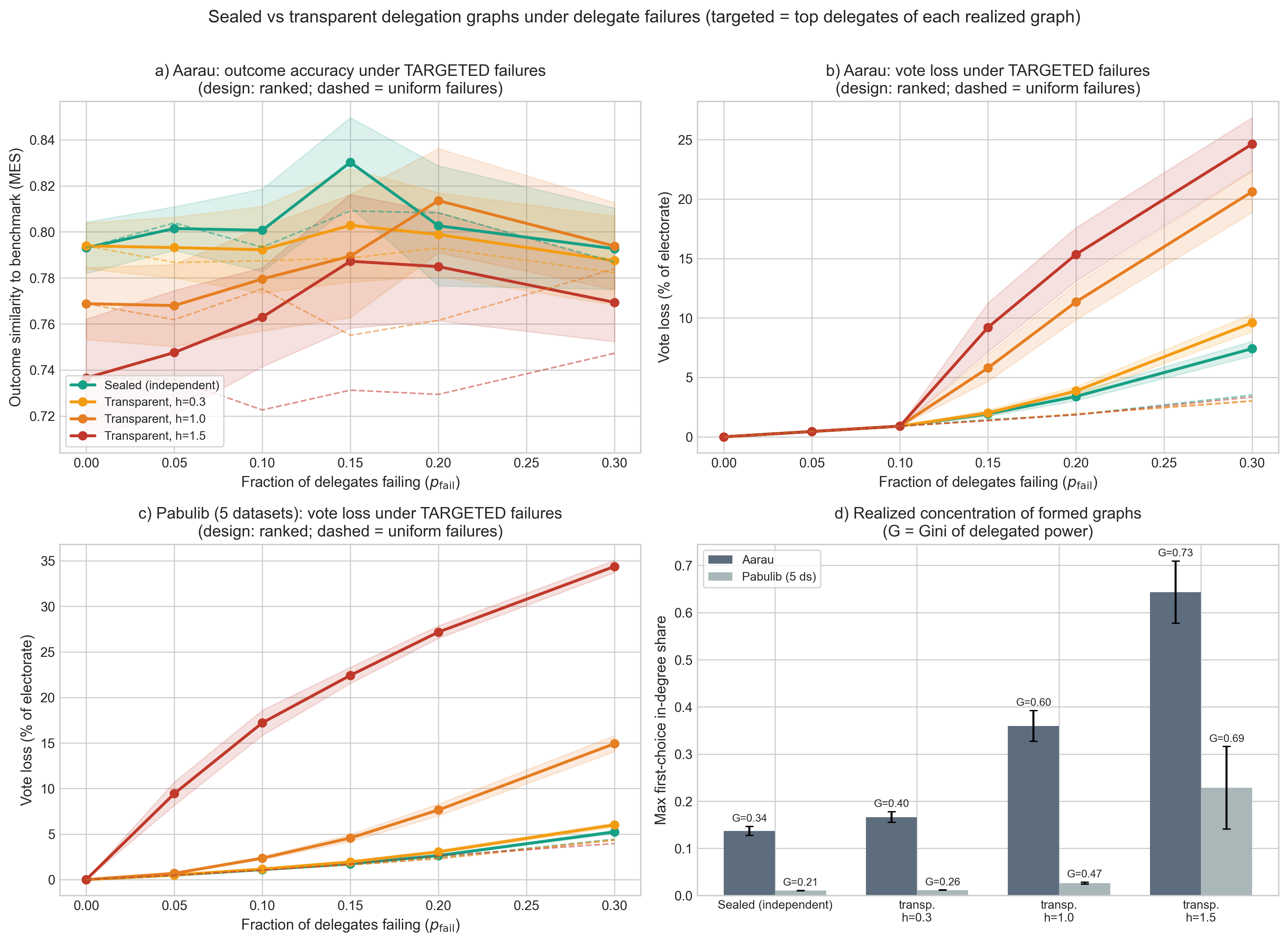}
\caption{Delegate failures on graphs formed under sealed versus transparent delegation, where targeted failures remove the top delegates of each realized graph. (a) Aarau outcome accuracy and (b) Aarau vote loss for the ranked design, with solid lines for targeted and dashed lines for uniform failures, over $30$ paired repetitions with $95\%$ confidence intervals. (c) Vote loss on five Pabulib elections. (d) Realized concentration of the formed graphs, measured as the share of first-choice delegations held by the largest delegate, with the Gini of delegated power annotated. Transparent formation concentrates the graph, concentration amplifies targeted damage, and the ranked-plus-fallback design stays at the arithmetic floor under every regime.}
\label{fig:regime_failures}
\end{figure}

\paragraph{Takeaway.} The chain claimed by this paper, that sealed formation keeps delegated power dispersed and that dispersed power blunts targeted attacks, is therefore demonstrated directly rather than asserted. The practical reading is conditional. Where universal fallback provision can be relied on, the ballot design alone secures the electorate. Where it cannot, and Figure~\ref{fig:fallback_provision} shows how the floor recedes as provision drops, sealing the formation phase is what keeps the cost of incomplete provision low. As with all transparent-regime results, the magnitude of the harm scales with the modeled herding strength $h$, which is empirically uncalibrated, as discussed in Section~\ref{sec: limitations}.

\subsection{Sensitivity Analysis} \label{sec: sensitivity}

Because the homophily model has several free parameters, we verify that the conclusions do not hinge on specific values, as shown in Figure~\ref{fig:sensitivity}. Sweeping the district bonus $\beta_g \in [1,5]$, the age scale $\sigma_a \in [5,80]$, and the political scale $\sigma_p \in [0.5,8]$ leaves sealed outcome accuracy within a narrow band of $0.78$ to $0.81$, with no monotone trend in any parameter, confirming that the homophily defaults are not cherry-picked. The herding exponent $h$ behaves as expected. Increasing $h$ monotonically raises power concentration (Figure~\ref{fig:sensitivity}a) and, beyond $h \approx 1$, begins to erode outcome accuracy, consistent with strong popularity-following reducing the diversity of delegated judgment. The default $h=0.3$ sits in the mild-herding regime.

\begin{figure}[h!]
\centering
\includegraphics[width=\textwidth]{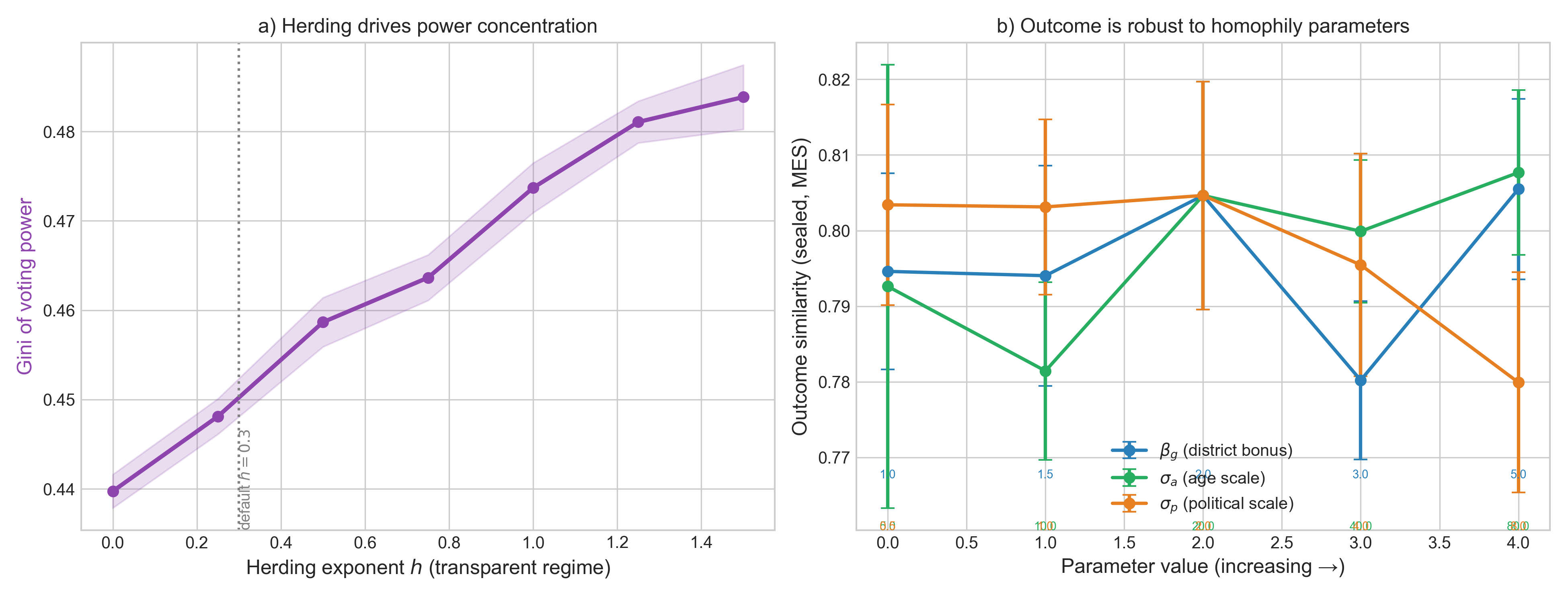}
\caption{Sensitivity analysis. (a) Increasing the herding exponent $h$ raises power concentration (Gini of voting power) in the transparent regime. (b) Sealed outcome accuracy is flat across the homophily parameters $\beta_g$, $\sigma_a$, and $\sigma_p$, indicating the defaults are not cherry-picked. Bands/bars are $95\%$ confidence intervals over 20 repetitions.}
\label{fig:sensitivity}
\end{figure}

\subsection{Limitations} \label{sec: limitations}

Only the Aarau dataset provides a survey rich enough to calibrate the full homophily model with a survey-based expertise measure; the additional datasets use weaker competence signals, namely engagement for Pabulib and an objective knowledge battery for CES, and the CES analysis reframes the task from budget-constrained PB to policy-issue majorities. Competence in all cases is a proxy rather than a perfect ground truth. Homophily features also vary by dataset. The delegation-accuracy and failure experiments analyze the full electorate of every dataset, while the regime-formation experiment of Section~\ref{sec: regime_failures} subsamples the Pabulib electorates to $2{,}500$ voters for tractability. Finally, our data-grounded model is two-level, with non-experts delegating to experts, and therefore understates the transitive concentration that arises in deployed multi-level systems, which we approximate only with the abstract model of Section~\ref{sec: power_concentration}. Extending the study to multi-level delegation, and to settings with both real ballots and an objective competence measure, is the most important direction for strengthening external validity.

Our herding mechanism is likewise modeled rather than measured. Randomized experiments show that visible popularity signals causally distort voting-like choices in adjacent domains~\cite{muchnik2013social, vanderijt2014field, glenski2017rating} and that social signals shift real political participation~\cite{bond2012sixtymillion}, and transparent delegation platforms exhibit the heavy-tailed, consolidating delegation distributions such a mechanism predicts~\cite{kling2015voting, fritsch2024analyzing}. However, no experiment to date manipulates whether voters can see the delegation choices of others in a liquid democracy setting, and existing liquid democracy experiments document delegation behavior without varying social information~\cite{campbell2022liquid, berinsky2025tracking}. The strength of count-driven herding in delegation, and hence the parameter $h$, therefore remains uncalibrated for delegation networks, although measured attachment exponents in adjacent online networks bracket its plausible range, as discussed in Appendix~\ref{app: seq_del_her}. Timestamped on-chain delegation records would permit direct estimation of the attachment kernel~\cite{jeong2003measuring, pham2015pafit}, which we leave to future work. Participation is modeled under the same caveat. No study yet demonstrates a delegation option converting would-be abstainers in a real election, so the recovery scenarios of Section~\ref{sec:evaluation} rest on a conversion that available evidence supports only indirectly, as set out in Section~\ref{sec:experiments}.

\section{Conclusion and Future Work}
\label{sec:conclusion}

This paper studied liquid democracy as a mechanism for trusted delegation in settings where voters differ in information, time, confidence, and trust relations. While delegation can help less-informed voters remain represented, we showed that the information regime governing delegation formation fundamentally shapes system behavior. Transparent delegation introduces visibility-driven reinforcement dynamics that can distort delegate selection and concentrate influence.

We proposed a liquid democracy mechanism with sealed delegation formation using timed-release cryptography, combined with ranked multi-delegation and fallback ballots for robustness under delegate failure. Experiments grounded in a municipal PB election and checked against twenty further PB elections and a large US survey yield a consistent picture. Delegation improves representational accuracy only when abstention is large and systematically unrepresentative; below that threshold it is at best neutral and often harmful, since delegating to a competence elite can move the outcome away from the electorate, and no single delegation style is safest on every electorate. The value of sealed formation is not a direct accuracy gain. It reduces power concentration and the resulting vulnerability to targeted delegate failures, and keeps outcomes stable under herding rather than dependent on the arbitrary order in which voters act. Ranked delegation with a personal fallback ballot keeps vote loss at an arithmetic floor near $3\%$ even when $30\%$ of delegates are targeted, versus $26\%$ for a single delegate, provided delegators supply fallback ballots. When provision is incomplete or the graph has concentrated under transparent formation, vote loss rises to as much as $49\%$, and sealing the formation phase is what keeps that risk low.

More broadly, the paper argues for a specific timing of transparency rather than for secrecy as a permanent principle. Delegation choices are hidden only while they are being formed, and are disclosed once the system moves to counting and verification. This makes the system less vulnerable to social pressure during formation while preserving public auditability afterward. The corresponding limitation is that protection ends at the reveal. Influence that conditions on the revealed record remains possible, which makes the mechanism best suited to settings where post-hoc vote publicity is already accepted.

\paragraph{Future work.}

Future research will extend this work in several directions. First, field deployments in live civic platforms or DAOs are needed to measure the real-world behavioral impact of sealed versus transparent delegation. Second, advanced game-theoretic modeling could explore how adversaries might attempt off-chain coordination or signaling when on-chain transparency is delayed. Finally, integrating the mechanism with privacy-preserving decentralized identity systems would allow delegates to prove domain expertise or community standing using verifiable credentials, fostering informed delegation while protecting voter anonymity. 

\bibliography{reference}
\bibliographystyle{splncs04}

\appendix
\section{Stadtidee Dataset Delegation Modeling}

\subsection{Expertise Modeling}
\label{app: expertise}

We construct an expertise score for each of the 222 survey participants based on self-reported political efficacy and interaction frequency with municipal actors. This score is only a proxy used in our experiments. It does not claim to measure objective expertise perfectly, but it provides a simple and interpretable way to distinguish more confident and more civically engaged participants from less confident ones.

Let $E_i$ denote the expertise score of voter $i$. The score is computed as a weighted linear combination of normalized responses to the following survey items:

\begin{enumerate}
    \item ``I can understand and assess important political issues on the local level well.'' (weight 0.4)
    \item ``How strongly do you feel connected to Aarau?'' (weight 0.2)
    \item Frequency of interaction with:
    \begin{itemize}
        \item Members of the city council (weight 0.05)
        \item Members of the residents council (weight 0.05)
        \item Members of the city administration (weight 0.05)
        \item Other inhabitants of Aarau (weight 0.05)
    \end{itemize}
    \item ``I have the confidence to actively participate in a conversation about local politics.'' (weight 0.2)
\end{enumerate}

All survey responses are linearly normalized to $[0,1]$ before aggregation. The final expertise score is given by
\[
E_i = 0.4 q_{1,i} + 0.2 q_{2,i} + 0.2 q_{4,i} + 0.05(q_{3.1,i} + q_{3.2,i} + q_{3.3,i} + q_{3.4,i}),
\]
where $q_{k,i}$ denotes the normalized response of voter $i$ to question $k$.

The selection of these specific survey items and their relative weights is theoretically grounded in the political science concepts of internal political efficacy and social capital. Internal political efficacy, the perception individuals have of their own ability to understand and participate in political processes, is a well-established and primary predictor of political knowledge and active participation~\cite{niemi1991measuring}. Accordingly, we assign the highest combined weight, 60\%, to direct self-assessments of political understanding in Question 1 and communicative confidence in Question 4. The remaining 40\% of the weight captures social capital and civic embeddedness, which are critical for local-level, municipal expertise~\cite{putnam2000bowling}. A strong connection to the municipality in Question 2 and frequent interactions with local political actors and other inhabitants in Question 3 indicate that a voter is actively embedded in the information network of the community, granting them access to localized knowledge that an isolated voter would lack.

\subsection{Expert Selection}
\subsubsection{Independent Expert Selection}
\label{app: ind_exp_sel}

To operationalize homophilous delegate selection, we extract three key demographic and ideological variables from the survey data: age, geographic location, and political orientation. Geographic proximity is determined using the self-reported residential districts of respondents, which are categorized into 18 distinct municipal neighborhoods such as Innenstadt, Zelgli, and Altstadt. Age proximity is calculated directly from self-reported age in years. Finally, ideological alignment is measured using a standard 11-point political self-placement scale, where respondents position themselves on a scale from 0 for ``left'' to 10 for ``right''.

Let $N_{\mathrm{del}} \subseteq N$ be the subset of delegators, the low-expertise group, and $N_{\mathrm{rep}} \subseteq N$ be the subset of eligible delegates, the mid- and high-expertise groups. Each delegator $i \in N_{\mathrm{del}}$ selects a delegate $j \in N_{\mathrm{rep}}$ with probability proportional to a selection weight $w_{ij}$. This weight combines three simple signals that make a delegate more recognizable or more plausible to the delegator.

The use of these three specific variables is strongly motivated by the sociological principle of homophily, the well-documented tendency of individuals to associate and bond with similar others~\cite{mcpherson2001birds}. In political and social networks, shared geographic context, generational alignment, and ideological proximity have repeatedly been shown to be primary determinants of trust and interpersonal information flow. By incorporating these factors, our model realistically captures how voters identify relatable, trustworthy delegates based on shared identity and local interests, rather than assuming they possess the ability to objectively rank all delegates by true expertise.

\medskip
\noindent
(i) Geographic proximity.\quad
Voters residing in the same administrative district may share local knowledge and local interests. We apply a discrete bonus:
\begin{equation}
    \delta_{ij} =
    \begin{cases}
        \beta_g & \text{if } g_i = g_j, \\
        1 & \text{otherwise,}
    \end{cases}
    \label{eq:district}
\end{equation}
where $\beta_g > 1$ is a tunable district-bonus parameter, with default value $\beta_g = 2$.

\medskip
\noindent
(ii) Age proximity.\quad
Age is used as a simple proxy for generational closeness. We apply an exponential decay over the absolute age difference:
\begin{equation}
    \alpha_{ij} = \exp\!\left(-\frac{|a_i-a_j|}{\sigma_a}\right),
    \label{eq:age}
\end{equation}
where $\sigma_a > 0$ controls how quickly the score decreases as age difference grows. The default is $\sigma_a = 20$ years.

\medskip
\noindent
(iii) Political proximity.\quad
Voters may also prefer delegates with similar political orientation. Using self-reported left-right position $p_i$, we apply the analogous decay:
\begin{equation}
    \pi_{ij} = \exp\!\left(-\frac{|p_i-p_j|}{\sigma_p}\right),
    \label{eq:politics}
\end{equation}
where $\sigma_p > 0$ controls tolerance for political distance, with default value $\sigma_p = 2$. Respondents who answered ``don't know'' are assigned $\pi_{ij}=1$ for all $j$, so this term contributes no information.

\medskip
\noindent
The three components are multiplied to obtain the final selection weight:
\begin{equation}
    w_{ij} = \delta_{ij}\alpha_{ij}\pi_{ij}.
    \label{eq:affinity}
\end{equation}
Each delegator $i$ then draws a delegate from $N_{\mathrm{rep}}$ according to the normalized distribution, where $i \to j$ denotes the event that $i$ delegates to $j$:
\begin{equation}
    \Pr[\,i \to j\,] = \frac{w_{ij}}{\sum_{k\in N_{\mathrm{rep}}} w_{ik}}.
    \label{eq:selection}
\end{equation}

\subsubsection{Sequential Delegation with Herding}
\label{app: seq_del_her}

In the herding variant, non-experts delegate sequentially in a uniformly random order. Each voter $i$, arriving at position $t$ in the sequence, can observe the running delegation count $c_j(t)$, that is, the number of delegators who have already chosen delegate $j$ before the turn of $i$. Popularity is incorporated directly into the selection weight through the scaling
\begin{equation}
    \tilde{w}_{ij}(t) = w_{ij}\bigl(1+c_j(t)\bigr)^h,
    \label{eq:herding}
\end{equation}
where $h \geq 0$ is the herding-strength parameter. The delegate is then drawn from the normalized distribution over $\tilde{w}_{ij}(t)$.

This formulation subsumes the baseline case. At $h=0$, delegation counts have no influence and the rule reduces to Equation~\eqref{eq:selection}. As $h$ increases, popularity amplifies the underlying selection weight, with $h=1$ corresponding to a linear multiplicative interaction between private choice and visible popularity, and $h>1$ causing popularity to dominate more strongly. The use of a continuous exponent, rather than a binary switch between independent and herding behavior, reflects the observation that social influence typically grows gradually rather than appearing all at once~\cite{lorenz2011social}.

The selection rule in Equation~\eqref{eq:herding} is a preferential-attachment kernel over the running delegation counts. Its exponent is in principle measurable from timestamped delegation arrivals using standard kernel-estimation methods~\cite{jeong2003measuring, pham2015pafit}. No such estimate exists for a delegation network, and the preferential delegation model of G\"olz et al.\ likewise treats the exponent as a free parameter~\cite{golz2018fluid}. Measurements in adjacent evolving networks bracket the plausible range. Citation networks and the Internet grow with near-linear attachment, at exponents of $0.95$ and $1.05$, and collaboration and actor networks at about $0.8$~\cite{jeong2003measuring}, while a study of $47$ online social networks finds exponents between roughly $0$ and $1.5$, with $70\%$ of networks sublinear, $30\%$ superlinear, and none exactly linear~\cite{kunegis2013preferential}. Read against these measurements, our default $h=0.3$ is conservative, $h=1$ corresponds to the near-linear regime typical of citation and Internet growth, and $h=1.5$ is a stress case in the superlinear regime. Publicly recorded on-chain delegation events in DAO governance provide exactly the arrival data the estimators need, and estimating an empirically anchored $h$ from these records is a concrete direction for future work. In this paper we treat $h$ as a free parameter within this bracketed range and report results across it.

\end{document}